\title{Digraph Coloring and Distance to Acyclicity} 
\titlerunning{Digraph Coloring and Distance to Acyclicity}
\author{Ararat Harutyunyan \and Michael Lampis\thanks{Michael Lampis was partially supported by a grant from the French National Research Agency under the JCJC program (ASSK: ANR-18-CE40-0025-01).} \and Nikolaos Melissinos}
\authorrunning{A. Harutyunyan,  M.  Lampis,
and N. Melissinos}
\institute{ Universit\'e Paris-Dauphine, PSL Research University, CNRS, UMR 7243, LAMSADE, Paris, France \email{ararat.harutyunyan@dauphine.fr, michail.lampis@dauphine.fr, nikolaos.melissinos@dauphine.eu} }
\newcommand\tw{\mathrm{tw}}
\newcommand\td{\mathrm{td}}
\tikzstyle{vertex}=[circle, draw, inner sep=1.2pt, minimum width=4pt, minimum size=0.4cm]
\tikzstyle{vertex2}=[circle, draw, inner sep=0pt, minimum width=4pt, minimum size=0.15cm]
\begin{document}

\maketitle

\begin{abstract}

In $k$-\textsc{Digraph Coloring} we are given a digraph and are asked to
partition its vertices into at most $k$ sets, so that each set induces a DAG.
This well-known problem is NP-hard, as it generalizes (undirected)
$k$-\textsc{Coloring}, but becomes trivial if the input digraph is acyclic.
This poses the natural parameterized complexity question of what happens when the
input is ``almost'' acyclic. In this paper we study this question using
parameters that measure the input's distance to acyclicity in either the
directed or the undirected sense.

In the directed sense perhaps the most natural notion of distance to acyclicity
is directed feedback vertex set (DFVS). It is already known that, for all $k\ge
2$, $k$-\textsc{Digraph Coloring} is NP-hard on digraphs of DFVS at most $k+4$.
We strengthen this result to show that, for all $k\ge 2$, $k$-\textsc{Digraph
Coloring} is already NP-hard  for DFVS exactly $k$. This immediately provides a
dichotomy, as $k$-\textsc{Digraph Coloring} is trivial if DFVS is at most
$k-1$.  Refining our reduction we obtain three further consequences: (i) $2$-\textsc{Digraph Coloring} is NP-hard for oriented graphs of feedback
vertex set (FVS) at most $3$; (ii) for all
$k\ge 2$, $k$-\textsc{Digraph Coloring} is NP-hard for graphs of feedback
\emph{arc} set (FAS) at most $k^2$; interestingly, this leads to a second dichotomy,
as we show that the problem is FPT by $k$ if FAS is at most $k^2-1$; (iii)
$k$-\textsc{Digraph Coloring} is NP-hard for graphs of DFVS $k$, even if the
maximum degree $\Delta$ is at most $4k-1$; we show that this is also
\emph{almost} tight, as the problem becomes FPT for DFVS $k$ and $\Delta\le
4k-3$.

Since these results imply that the problem is also NP-hard on graphs of bounded directed treewidth, we then  consider parameters that
measure the distance from acyclicity of the underlying graph.  On the positive
side, we show that $k$-\textsc{Digraph Coloring} admits an FPT algorithm
parameterized by treewidth, whose parameter dependence is $(\tw!)k^{\tw}$.
Since this is considerably worse than the $k^{\tw}$ dependence of (undirected)
$k$-\textsc{Coloring}, we pose the question of whether the $\tw!$ factor can be
eliminated. Our main contribution in this part is to settle this question in
the negative and show that our algorithm is essentially optimal, even for the
much more restricted parameter treedepth and for $k=2$. Specifically, we show
that an FPT algorithm solving $2$-\textsc{Digraph Coloring} with dependence
$\td^{o(\td)}$ would contradict the ETH.

In the end, we consider the class of tournaments. It is known that deciding whether a tournament is $2$-colorable is NP-complete. We present an algorithm that decides if we can $2$-color a tournament in $O^*(\sqrt[3]{6}^n)$ time.

\keywords{Digraph Coloring \and Dichromatic number \and  NP-completeness \and Parameterized complexity \and Feedback vertex and arc sets}

\CRclass{Mathematics of computing$\rightarrow$Graph algorithms \and Theory of Computation $\rightarrow$ Design and Analysis of Algorithms $\rightarrow$ Parameterized Complexity and Exact Algorithms}

\end{abstract}


\section{Introduction}

In  \textsc{Digraph Coloring}, we are given a digraph $D$ and are asked to
calculate the smallest $k$ such that the vertices of $D$ can be partitioned
into $k$ \emph{acyclic} sets. In other words, the objective of this problem is
to color the vertices with the minimum number of colors so that no directed
cycle is monochromatic.  This notion is called the \emph{dichromatic number} and it was introduced by V.
Neumann-Lara \cite{NL82}. More recently, digraph coloring has received much
attention, in part because it turns out that many results about the chromatic
number of undirected graphs quite naturally carry over to the dichromatic
number of digraphs
\cite{AboulkerCHLMT19,AndresH15,BensmailHL18,DBFJKM2004,CHZ2007,GurskiKR20,H11,HKMR,HarutyunyanLTW19,HochstattlerSS20,LiM17,MillaniSW19,Mohar03,SteinerW20}.
We note that \textsc{Digraph Coloring} generalizes \textsc{Coloring} (if we
simply replace all edges of a graph by pairs of anti-parallel arcs) and is
therefore NP-complete.

In this paper we are interested in the computational complexity of
\textsc{Digraph Coloring} from the point of view of structural parameterized
complexity\footnote{In the remainder, we assume the reader is familiar with the
basics of parameterized complexity theory, such as the class FPT, as given in
standard textbooks \cite{CyganFKLMPPS15}.}. Our main motivation for studying
this is that (undirected) \textsc{Coloring} is a problem of central importance
in this area whose complexity is well-understood, and it is natural to hope
that some of the known tractability results may carry over to digraphs --
especially because, as we mentioned, \textsc{Digraph Coloring} seems to behave
as a very close counterpart to \textsc{Coloring} in many respects.  In
particular, for undirected graphs, the complexity of \textsc{Coloring} for
``almost-acyclic'' graphs is very precisely known: for all $k\ge 3$ there is a
$O^*(k^{\tw})$ algorithm, where $\tw$ is the input graph's treewidth, and this
is optimal (under the SETH) even if we replace treewidth by much more
restrictive parameters \cite{JaffkeJ17,LokshtanovMS18}. Can we achieve the same
amount of precision for \textsc{Digraph Coloring}?

\paragraph{Our results:} The main question motivating this paper is therefore
the following: Does \textsc{Digraph Coloring} also become tractable for
``almost-acyclic'' inputs? We attack this question from two directions.

First, in Section~\ref{sec:bounded:fvs}, we consider the notion of acyclicity in the digraph sense and study
cases where the input digraph is close to being a DAG. Possibly the most
natural such measure is directed feedback vertex set (DFVS), which is the
minimum number of vertices whose removal destroys all directed cycles.  The
problem is paraNP-hard for this parameter, as for all fixed $k\ge 2$,
$k$-\textsc{Digraph Coloring} is already known to be NP-hard,  for inputs of
DFVS at most $k+4$ \cite{MillaniSW19}.  Our first contribution is to tighten
this result by showing that actually $k$-\textsc{Digraph Coloring} is already
NP-hard for DFVS of size \emph{exactly} $k$.  This closes the gap left by the
reduction of \cite{MillaniSW19} and provides a complete dichotomy, as the
problem is trivially FPT by $k$ when the DFVS has size strictly smaller than
$k$ (the only non-trivial part of the problem in this case is to find the DFVS
\cite{ChenLLOR08}).  In the end of this section we consider $2$-\textsc{Digraph Coloring} on oriented graphs. We prove that it is NP-hard to decide if an oriented graph is $2$-colorable even in cases where the size of DFVS is $3$. This is tight as there exists an easy argument showing that all oriented graphs with DFVS $k$ are $k$-colorable.

In Section~\ref{sec:bounded:fas:degree} we investigate if by considering a more restricted notion of near-acyclicity, or by imposing further restrictions, such as bounding the
maximum degree of the graph, could lead to an FPT algorithm.
Unfortunately, we show that neither of these
suffices to make the problem tractable. In particular, by refining our
reduction we obtain the following: First, we show that for all $k\ge 2$,
$k$-\textsc{Digraph Coloring} is NP-hard for digraphs of feedback \emph{arc}
set (FAS) $k^2$, that is, digraphs where there exists a set of $k^2$ arcs whose
removal destroys all cycles (feedback arc set is of course a more restrictive
parameter than feedback vertex set).  Interestingly, this also leads us to a
complete dichotomy, this time for the parameter FAS: we show that $k$-coloring
becomes FPT (by $k$) on graphs of FAS at most $k^2-1$, by an argument that
reduces this problem to coloring a subdigraph with at most $O(k^2)$ vertices,
and hence the correct complexity threshold for this parameter is $k^2$.
Second, we show that $k$-coloring a digraph with DFVS $k$  remains NP-hard even
if the maximum degree is at most $4k-1$. This further strengthens the reduction
of \cite{MillaniSW19}, which showed that the problem is NP-hard for bounded
\emph{degeneracy} (rather than degree). Almost completing the picture, we show
that $k$-coloring a digraph with DFVS $k$ and maximum degree at most $4k-3$ is
FPT by $k$, leaving open only the case where the DFVS is exactly $k$ and the
maximum degree exactly $4k-2$.

In Section~\ref{sec:bounded:treewidth}, because of the negative results for DFVS and FAS, we deiced to consider 
as parameter the treewidth
of the underlying graph.  It turns out that, finally, this suffices to lead to
an FPT algorithm, obtained with standard DP techniques.  However, our algorithm
has a somewhat disappointing running time of $(\tw!)k^{\tw}n^{O(1)}$, which is
significantly worse than the $k^{\tw}n^{O(1)}$ complexity which is known to be
optimal for undirected \textsc{Coloring}, especially for small values of $k$.
This raises the question of whether the extra $(\tw!)$ factor can be removed.
Our main contribution in this part is to show that this is likely impossible,
even for a more restricted case.  Specifically, we show that if the ETH is
true, no algorithm can solve $2$-\textsc{Digraph Coloring} in time
$td^{o(\td)}n^{O(1)}$, where $\td$ is the input graph's treedepth, a parameter
more restrictive than treewidth (and pathwidth). As a result, this paper makes
a counterpoint to the line of research that seeks to find ways in which
dichromatic number replicates the behavior of chromatic number in the realm of
digraphs by pinpointing one important aspect where the two notions are quite
different, namely their complexity with respect to treewidth.

Finally, in Section~\ref{sec:2-col:tournament}, we consider tournaments. It is already known that $2$-\textsc{Digraph Coloring} is NP-hard for tournaments~\cite{CHZ2007}.
The exhaustive algorithm to check if a tournament is $2$-colorable takes $O^*(2^n)$ time as there exists $2^n$ possible $2$-colorings for a graph. We improve this running time by proposing an algorithm that answers the same question in $O^*(\sqrt[3]{6}^n)$.

\paragraph{Other related work:} Structural parameterizations of
\textsc{Digraph Coloring} have been studied in \cite{SteinerW20}, who showed
that the problem is FPT by modular width generalizing the algorithms of
\cite{GajarskyLO13,Lampis12}; and \cite{GurskiKR20} who showed that the problem
is in XP by clique-width (note that hardness results for \textsc{Coloring} rule
out an FPT algorithm in this case \cite{FominGLS10,FominGLSZ19,Lampis18}). Our
results on the hardness of the problem for bounded DFVS and FAS build upon the
work of \cite{MillaniSW19}. The fact that the problem is hard for bounded DFVS
implies that it is also hard for most versions of directed treewidth, including
DAG-width, Kelly-width, and directed pathwidth
\cite{BerwangerDHKO12,GanianHK0ORS16,HunterK08,JohnsonRST01,LampisKM11}.
Indeed, hardness for FAS implies also hardness for bounded elimination width, a
more recently introduced restriction of directed treewidth \cite{FernauM14}.
For undirected treewidth, a problem with similar behavior is DFVS: (undirected)
FVS is solvable in $O^*(3^{\tw})$ \cite{CyganNPPRW11} but DFVS cannot be solved
in time $\tw^{o(\tw)}n^{O(1)}$, and this is tight under the ETH
\cite{BonamyKNPSW18}.  For other natural problems whose complexity by treewidth
is $tw^{\Theta(\tw)}$ see \cite{BasteST20,abs-2007-14179,BonnetBKM19}

With respect to maximum degree, it is not hard to see that $k$-\textsc{Digraph
Coloring} is NP-hard for graphs of maximum degree $2k+2$, because
$k$-\textsc{Coloring} is NP-hard for graphs of maximum degree $k+1$, for all
$k\ge 3$ \footnote{Note that this argument does not prove that
$2$-\textsc{Digraph Coloring} is NP-hard for maximum degree $6$, but this is
not too hard to show. We give a proof in Theorem \ref{thm:easy} for the sake of
completeness.}. On the converse side, using a generalization of Brooks' theorem
due to Mohar \cite{Mohar10} one can see that $k$-\textsc{Digraph Coloring}
digraphs of maximum degree $2k$ is in P. 
This leaves as the only open case digraphs of degree $2k+1$, which in a sense
mirrors our results for digraphs of DFVS $k$ and degree $4k-2$.  We note that
the NP-hardness of $2$-\textsc{Digraph Coloring} for bounded degree graphs is
known even for graphs of large girth, but the degree bound follows the imposed
bound on the girth \cite{FederHellSubi}.

\section{Definitions, Notation and Preliminaries}

We use standard graph-theoretic notation.  All digraphs are loopless and have
no parallel arcs; two oppositely oriented arcs between the same pair of
vertices, however, are allowed and are called a \emph{digon}.  Oriented graphs are digraphs which do not contain any digons. The in-degree
(respectively, out-degree) of a vertex is the number of arcs coming into
(respectively going out of) a vertex. The degree of a vertex is the sum of its
in-degree and out-degree. For a set of arcs $F$, $V(F)$ denotes the set of
their endpoints. For a set of vertices $S$ of a digraph $D$, $D[S]$ denotes the
digraph induced by $S$ and $N[S]$ denotes the closed neighborhood of $S$, that
is, $S$ and all vertices that have an arc to or from $S$.  

The \emph{chromatic number} of a graph $G$ is the minimum number of colors $k$
needed to color the vertices of $G$ such that each color class is an
independent set. We say that a digraph $D=(V,E)$ is \emph{$k$-colorable} if we
can color the vertices of $D$ with $k$ colors such that each color class
induces an acyclic subdigraph (such a coloring is called a \emph{proper
$k$-coloring}).  The \emph{dichromatic number}, denoted by $\vec{\chi}(D)$, is
the minimum number $k$ for which $D$ is $k$-colorable. The maximum degree of a graph or digraph is denoted with $\Delta$. 

A subset of vertices $S \subset V$ of $D$ is called a \emph{feedback vertex
set} if $D-S$ is acyclic. 

\begin{remark}\label{remark:dfvs:k-1:k:colorable} Every digraph $D=(V,E)$ with feedback vertex set of size at most
$k-1$ is $k$-colorable.  \end{remark}

The remark holds because we can use distinct colors for the vertices of the
feedback vertex set and the remaining color for the rest of the graph. 

A subset of arcs $A \subset E$ of $D$ is called a
\emph{feedback arc set} if $D-A$ is acyclic.  For the definition of treewidth
and nice tree decompositions we refer the reader to \cite{CyganFKLMPPS15}. A
graph $G$ has treedepth at most $k$ if one of the following holds: (i) $G$ has
at most $k$ vertices (ii) $G$ is disconnected and all its components have
treedepth at most $k$ (iii) there exists $u\in V(G)$ such that $G-u$ has
treedepth at most $k-1$. We use $\tw(G), \td(G)$ to denote the treewidth and
treedepth of a graph. It is known that $\tw(G)\le \td(G)$ for all graphs $G$.


The Exponential Time Hypothesis (ETH) \cite{ImpagliazzoPZ01} states that there
is a constant $c>1$ such that no algorithm which decides if \textsc{3-SAT}
formulas with $n$ variables and $m$ clauses are satisfiable can run in time
$c^{n+m}$. In this paper we will use the simpler (and slightly weaker) version
of the ETH which simply states that \textsc{3-SAT} cannot be solved in time
$2^{o(n+m)}$.

Throughout the paper, when $n$ is a positive integer we use $[n]$ to denote the
set $\{1,\ldots,n\}$. For a set $V$ an ordering of $V$ is an injective function
$\sigma:V\to [ |V| ]$. It is a well-known fact that a digraph $D$ is acyclic if
and only if there exists an ordering $\sigma$ of $V(D)$ such that for all arcs
$uv$ we have $\sigma(u)<\sigma(v)$. This is called a topological ordering of
$D$.

 We conclude this section with a preliminary theorem. As we mentioned, the argument from (undirected) graph coloring that shows why $k$-\textsc{Digraph Coloring} is NP-hard for digraphs of $\Delta = 2k+ 2$ does not hold for $k=2$. Our first theorem gives a proof for this case.

\begin{theorem}\label{thm:easy} It is NP-hard to decide if a given digraph with
maximum degree $6$ is $2$-colorable.  \end{theorem}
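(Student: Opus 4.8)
The plan is to reduce from \textsc{NAE-3-SAT} (Not-All-Equal 3-SAT), which is well-known to be NP-complete. The observation driving the construction is that a directed triangle behaves exactly like a NAE-constraint under $2$-coloring: a directed $3$-cycle $q_1\to q_2\to q_3\to q_1$ is monochromatic if and only if all three vertices share a color, so in any proper $2$-coloring these three vertices must receive not-all-equal colors, which is precisely the NAE condition. I would therefore associate to each clause a fresh directed triangle on three clause-vertices $q_1,q_2,q_3$ and arrange that the color of $q_r$ equals the truth value of the $r$-th literal of the clause.

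To encode variables with bounded degree, for each variable $x$ I would build a \emph{variable path}: a sequence of vertices joined consecutively by digons. Since a digon is a forbidden monochromatic $2$-cycle, consecutive path vertices are forced to opposite colors, so the path vertices alternate between the two colors; I interpret the color of the even-indexed vertices as the value of $x$ and that of the odd-indexed vertices as the value of $\neg x$. Each such vertex serves as a \emph{port}: to realize a positive occurrence of $x$ in a clause I attach the corresponding $q_r$ by a digon to an unused $\neg x$-port, which forces $c(q_r)$ to the value of $x$ (the two negations, from the path parity and from the digon, compose correctly); for a negative occurrence I attach $q_r$ by a digon to an unused $x$-port, forcing $c(q_r)$ to the value of $\neg x$. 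By making each path long enough we obtain one fresh port per occurrence, so no bound on the number of occurrences is needed. A direct count then gives maximum degree $6$: an internal path vertex has degree $4$ from its two path-digons plus at most $2$ from a single attached port-digon, while each $q_r$ has degree $2$ from its triangle and $2$ from its one port-digon.

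For correctness, in one direction a NAE-satisfying assignment yields the coloring in which each variable path is colored according to the value of $x$ and each $q_r$ receives the value of its literal; here I would verify that every color class is acyclic, which is clean because all digons (both path-digons and port-digons) are bichromatic by construction, so within a single color class the only arcs are triangle arcs, and each triangle contributes at most one monochromatic arc and never a full monochromatic cycle, as its clause is NAE-satisfied. Hence each color class is a disjoint union of single arcs and isolated vertices. In the converse direction, a proper $2$-coloring forces the alternation along each variable path, and the port-digons pin $c(q_r)$ to a consistent literal value; since no clause triangle may be monochromatic, the three literals of each clause cannot be all-true or all-false, which is exactly a NAE-satisfying assignment.

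The main obstacle I anticipate is not the degree calculation but the careful bookkeeping of port polarities, namely ensuring that attaching a clause-vertex to the appropriate $x$- or $\neg x$-port forces exactly the intended literal value, together with ruling out long monochromatic cycles threading through the digon-paths; the latter is dispatched cleanly by the fact that every digon is bichromatic, so the only same-color arcs live inside the clause triangles.
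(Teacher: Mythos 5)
Your proposal is correct and follows essentially the same route as the paper's proof: a reduction from \textsc{NAE-3-SAT} in which each clause becomes a directed cycle (monochromatic if and only if all its vertices share a color, i.e.\ exactly the NAE condition) and digons force the clause vertices to carry the literal values. The only difference is in the variable gadget: the paper first restricts \textsc{NAE-3-SAT} so that every literal occurs at most twice (via the trick of Lemma~\ref{lem:restricted}) and then uses a single digon $x,\neg x$ per variable, whereas you achieve the same degree bound of $6$ with a digon-path of fresh ports that handles unbounded occurrence counts directly in the graph; both devices work, and your degree count and two-directional correctness argument go through.
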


\begin{proof}

We perform a reduction from \textsc{NAE-3-SAT}, a variant of \textsc{3-SAT}
where we are asked to find an assignment that sets at least one literal to True
and one to False in each clause. First we remark that this problem remains
NP-hard if all literals appear at most twice. 

To see this, suppose that $x$
appears $\ell\ge 4$ times in $\phi$. We replace each appearance of $x$ with a
fresh variable $x_i$, $i\in [\ell]$ and add to the formula the clauses $(\neg
x_1 \lor x_2)\land (\neg x_2\lor x_3)\ldots (\neg x_{\ell}\lor x_1)$. Repeating
this for all variables that appear at least $4$ times produces an equivalent
instance $\phi'$ with $O(n+m)$ variables and clauses such that all literals
appear at most $2$ times. Furthermore, any satisfying assignment the formula forces exactly one true and one false literal in the new clauses.

We construct a digraph as follows: for each variable $x_i$ we make a digon and
label its vertices $x_i, \neg x_i$. We call this part of the digraph the
assignment part. For each clause we make a directed cycle of size equal to the
clause and associate each vertex of the cycle with a literal. We call this part
the satisfaction part. Finally, for each vertex of the assignment part we
connect it with digons with each vertex of the satisfaction part that
represents the opposite literal.

The digraph we constructed has maximum degree $6$; indeed, each literal has degree two in the assignment part and since each literal appears in at most two clauses, it has degree at most $4$ in the satisfaction part.
If there is a satisfying assignment then we give color $1$ to all True literals
of both parts and color $2$ to everything else. Observe that all arcs
connecting the two parts are bichromatic and if the assignment is satisfying
all directed cycles are also bichromatic.  For the converse direction, if there
is a $2$-coloring we can extract an assignment by setting to True all literals
which have color $1$ in the assignment part. Note that this implies that in the
satisfaction part all literals which have color $1$ have been set to True and
all literals which have color $2$ have been set to False, because of the digons
connecting the two parts. But this implies that our assignment is satisfying
because all cycles are bichromatic.  \end{proof}

\section{Bounded Feedback Vertex Set} \label{sec:bounded:fvs}

In this section we study the complexity of the problem parameterized by the
size of the feedback vertex set of a digraph. Throughout we will assume that a
feedback vertex set is given to us; if not we can use known FPT algorithms to
find the smallest such set \cite{ChenLLOR08}. 

As we are mentioned already, a digraph of DFVS $k-1$ can be always colored with $k$ colors. 
Our main result in this section is that $k$-\textsc{Digraph Coloring} is
NP-hard for digraphs of DFVS $k$. 
Observe that Remark~\ref{remark:dfvs:k-1:k:colorable} indicates 
that this result will be best possible.

\begin{remark} \label{remark_fvs=k_digons=kchoose2} Let $D=(V,E)$ be a digraph
with feedback vertex set $F$ of size $|F|=k$. If $F$ does not induce a
bi-directed clique, then $D$ is $k$-colorable.  \end{remark} 

Indeed, if $u,v\in F$ are not connected by a digon we can use one color for
$\{u,v\}$, $k-2$ distinct colors for the rest of $F$, and the remaining color
for the rest of the graph. Remark \ref{remark_fvs=k_digons=kchoose2} will also
be useful later in designing an algorithm, but at this point it is interesting
because it tells us that, since the graphs we construct in our reduction have
DFVS $k$ and must in some cases have $\vec{\chi}(D)>k$, our reduction needs to
construct a bi-directed clique of size $k$.

Before we go on to our reduction let us also mention that we will reduce from a
restricted version of \textsc{3-SAT} with the following properties: (i) all
clauses must have either only positive literals or only negative
literals (ii) all variables appear at most $2$ times positive and $1$ time
negative. We call this \textsc{Restricted-3-SAT}.

\begin{lemma}\label{lem:restricted} \textsc{Restricted-3-SAT} is NP-hard and
cannot be solved in $2^{o(n+m)}$ time unless the ETH is false. \end{lemma}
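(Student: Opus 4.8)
The plan is to prove Lemma \ref{lem:restricted} by a chain of polynomial-time, linear-size reductions starting from the standard \textsc{3-SAT} problem, each step enforcing one of the two required structural properties while preserving both satisfiability and the linear bound on the number of variables and clauses (so that the ETH lower bound transfers). Since the ETH (in its weak form) asserts that \textsc{3-SAT} admits no $2^{o(n+m)}$ algorithm, it suffices to show that each reduction maps an instance with $n$ variables and $m$ clauses to an instance of \textsc{Restricted-3-SAT} with $O(n+m)$ variables and $O(n+m)$ clauses; then a hypothetical subexponential algorithm for \textsc{Restricted-3-SAT} would yield one for \textsc{3-SAT}, contradicting the ETH.

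First I would normalize the clause polarity, i.e. enforce property (i) that every clause is \emph{monotone} (all literals positive or all negative). The standard trick is to introduce, for each variable $x$, a fresh variable $x'$ intended to represent $\overline{x}$, together with gadget clauses forcing $x'$ to take the opposite value of $x$. The equivalence $x' \leftrightarrow \overline{x}$ is itself expressible by the two implications $x \to \overline{x'}$ and $\overline{x} \to x'$, i.e. by the clauses $(\overline{x}\vee\overline{x'})$ and $(x\vee x')$, but these are not monotone. A cleaner route is to first observe that once both $x$ and $x'$ are available, every original clause can be rewritten using only positive literals: a negated occurrence of $x$ is replaced by a positive occurrence of $x'$. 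This makes all the \emph{original} clauses positive. The remaining task is to force consistency between $x$ and $x'$ using only monotone clauses, which is exactly what the constraint $x \oplus x' = 1$ demands; encoding an XOR with monotone clauses is the first place where care is needed, and I expect to handle it by allowing the two small consistency clauses to be the only negative clauses, which is permitted since property (i) allows each clause to be \emph{either} all-positive or all-negative.

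Second I would enforce the occurrence bound, property (ii): every variable appears at most twice positively and at most once negatively. The standard tool here is variable splitting via \emph{equality gadgets} (cycles of implications). If a variable $y$ occurs too many times, replace its occurrences by fresh copies $y_1, y_2, \ldots, y_t$ and add clauses forcing $y_1 \to y_2 \to \cdots \to y_t \to y_1$, which guarantees all copies share a common truth value; each copy then carries only a bounded number of the original occurrences. The subtlety is to do this so that (a) the forcing clauses themselves respect the monotone restriction and the $(2^+,1^-)$ occurrence budget, and (b) the number of new variables and clauses stays linear in the original size. Since each implication $y_i \to y_{i+1}$ is the clause $(\overline{y_i}\vee y_{i+1})$, it contributes one negative occurrence to $y_i$ and one positive to $y_{i+1}$; arranging the implication cycle so that each copy receives exactly one incoming and one outgoing implication keeps each copy's negative count at exactly $1$ and leaves room for at most $2$ positive occurrences to be distributed among the split copies. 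I would verify that a clause of length $3$ contributes to at most three variables' budgets and that the total blow-up is a constant factor.

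The main obstacle, and the step I would treat most carefully, is the simultaneous satisfaction of \emph{both} restrictions together with the linear size bound: the monotonization step introduces consistency constraints that themselves create occurrences, and the occurrence-reduction step introduces implication clauses whose negative literals must be accommodated within the tight $1^-$ budget. The danger is a feedback loop where fixing one property violates the other. I would resolve this by performing the reductions in the right order and by a careful accounting argument, ensuring each gadget is applied once and that its side effects fall within budget; the key insight is that property (i) \emph{permits} all-negative clauses, so the implication and consistency gadgets can be realized without blowing up any single variable's positive count. Finally I would confirm correctness by the routine observation that each gadget's satisfying assignments are in bijection (after projecting out the auxiliary variables) with those of the previous instance, so satisfiability is preserved throughout, and that the composition of a constant number of linear-size reductions is itself linear-size, yielding both NP-hardness and the claimed $2^{o(n+m)}$ ETH lower bound.
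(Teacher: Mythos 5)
There is a genuine gap, and it sits exactly at the point you flagged but did not resolve: the interaction between the two properties. Your plan monotonizes first (property (i)) and then bounds occurrences (property (ii)) using implication cycles $y_1 \to y_2 \to \cdots \to y_t \to y_1$. But each implication clause $(\neg y_i \lor y_{i+1})$ contains one negative and one positive literal, so it is neither all-positive nor all-negative; adding these clauses destroys property (i) that your first step just established. Your stated ``key insight'' --- that property (i) permits all-negative clauses and hence ``the implication and consistency gadgets can be realized'' within the restriction --- is false for the implication gadgets: an implication between two variables is inherently a mixed clause and cannot be expressed by a single monotone clause. So, as written, the construction does not output a \textsc{Restricted-3-SAT} instance, and the appeal to ``the right order and a careful accounting argument'' does not repair this, because the order you actually chose is the one that fails.

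The paper performs the two steps in the opposite order, and that is what makes them compose. First it bounds occurrences using the (mixed) implication cycles, obtaining $\phi'$ in which every literal occurs at most twice. Then it monotonizes: every occurrence of $\neg x$ is replaced by a fresh variable $x'$, and only the single all-negative clause $(\neg x \lor \neg x')$ is added. This one step simultaneously makes every original clause all-positive, adds only all-negative clauses, and resets every variable's negative occurrence count to exactly one while keeping positive counts at most two --- so both properties hold at the end, with linear blow-up. Note also that the paper does not need the full XOR $x \oplus x' = 1$ you try to encode: the one-directional constraint $(\neg x \lor \neg x')$ suffices, since if a rewritten clause is satisfied by $x'$ being true, then $x$ is false and hence $\neg x$ satisfies the original clause, while the forward direction just sets $x' = \neg x$. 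If you insist on your ordering, you would need a genuinely monotone splitting gadget (for instance, pairing each copy $y_i$ with a complement copy $y_i'$ and using all-negative clauses $(\neg y_i \lor \neg y_i')$ together with all-positive clauses $(y_i \lor y_{i+1}')$, which jointly encode the cyclic implications); but that gadget is absent from your proposal, and without it the argument does not go through.
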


\begin{proof}
Start with an arbitrary instance
$\phi$ of \textsc{3-SAT} with $n$ variables and $m$ clauses. We first make sure
that every variable appears at most $3$ times as follows. 
First use the trick of Lemma \ref{thm:easy} to decrease the number of appearances of
each literal to two.  We now edit $\phi'$ as follows: for each variable
$x$ of $\phi'$ we replace every occurence of $\neg x$ with a fresh variable
$x'$. We then add the clause $(\neg x \lor \neg x')$. This gives a new
equivalent instance $\phi''$ which also has $O(n+m)$ variables and clauses and
satisfies all properties of \textsc{Restricted-3-SAT}.  \end{proof}

\begin{theorem} \label{theorem:2fvs:2col:NPhard}

For all $k\ge 2$, it is $NP$-hard to decide if a digraph $D=(V,E)$ is
$k$-colorable even when the size of its feedback vertex set is $k$.
Furthermore, this problem cannot be solved in time $2^{o(n)}$ unless the ETH is
false.

\end{theorem}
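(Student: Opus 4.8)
The plan is to reduce from \textsc{Restricted-3-SAT} (Lemma~\ref{lem:restricted}). Given a formula with variables $x_1,\dots,x_n$ and clauses $C_1,\dots,C_m$, I build a digraph $D$ that is $k$-colorable if and only if the formula is satisfiable, together with a feedback vertex set of size exactly $k$. Since by Remark~\ref{lemma_fvs=k_digons=kchoose2} this set must induce a bidirected clique, I start from a bidirected clique $F=\{f_1,\dots,f_k\}$; in any proper $k$-coloring its vertices get pairwise distinct colors (each digon is a forbidden monochromatic $2$-cycle), so I may assume $f_i$ receives color $c_i$. I designate $c_1$ as \emph{true} and $c_2$ as \emph{false}, add a vertex $v_x$ for each variable $x$, and force $v_x\in\{c_1,c_2\}$ by joining $v_x$ to $f_3,\dots,f_k$ with digons (no digons are needed when $k=2$), reading $v_x=c_1$ as ``$x$ is true''. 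Every cycle created so far passes through $F$.

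The design is constrained by the requirement that $F$ remain a feedback vertex set: no cycle may avoid $F$, so I cannot place a digon between two non-clique vertices and must realise every coloring constraint as a cycle through a clique vertex (a ``hub''). A monochromatic cycle through $f_i$ forbids all of its vertices from jointly taking color $c_i$, so the natural primitive is ``two vertices must not share a fixed color''. For each appearance of a literal I create a fresh \emph{copy} vertex, again restricted to $\{c_1,c_2\}$ by digons to $f_3,\dots,f_k$. A positive literal $x$ gets a copy $a$ with the three arcs $f_2\to a$, $a\to v_x$, $v_x\to f_2$, whose only new cycle $f_2\to a\to v_x\to f_2$ forbids $a=v_x=c_2$ (so $x$ false forces $a=c_1$); symmetrically a negative literal $\neg y$ gets a copy $b$ with arcs $f_1\to b$, $b\to v_y$, $v_y\to f_1$, forbidding $b=v_y=c_1$. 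For a positive clause $(x_1\lor x_2\lor x_3)$, which is violated exactly when all three variables are false, I close its three copies into a cycle through $f_1$, namely $f_1\to a_1\to a_2\to a_3\to f_1$, which is monochromatic (in $c_1$) precisely when the clause is violated; a negative clause is closed through $f_2$. Crucially the clause cycle and the copy constraints of its literals use \emph{different} hubs ($f_1$ versus $f_2$), which prevents the clause cycle and a copy gadget from accidentally forming a digon on a shared copy vertex. This is also exactly why we reduce from the all-positive-or-all-negative variant: the ``forbid a shared color'' primitive realises a \emph{negation} copy through a single hub, whereas an identity copy would require forbidding a pair of \emph{distinct} colors on two vertices, which no single monochromatic cycle can express.

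The correctness then has two directions. For the forward direction I color $v_x$ by its truth value and set every copy to the \emph{opposite} color of its variable; then each copy arc joins two oppositely colored vertices and therefore lies in no color class at all. This is the key point that defeats the main obstacle, namely spurious monochromatic cycles that bridge several gadgets: once all copy arcs vanish from every color class, tracing the arcs into and out of $f_1$ (resp.\ $f_2$) shows that the only remaining monochromatic cycles are the individual clause cycles, each of which is broken because its clause is satisfied. For the backward direction, any proper $k$-coloring puts distinct colors on $F$ and colors from $\{c_1,c_2\}$ on the variables and copies; a clause cycle being non-monochromatic produces a copy colored $c_2$ (resp.\ $c_1$), and the corresponding copy gadget then forces the underlying literal to be true, so every clause is satisfied.

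Finally, $D-F$ is acyclic because all of its arcs are copy arcs (copy $\to$ variable) and intra-clause path arcs (copy $\to$ copy), so ordering all copies before all variables is topological; hence $F$ is a feedback vertex set of size $k$. It has size exactly $k$: a feedback vertex set of size $k-1$ would have to hit all $\binom{k}{2}$ digons of the clique, forcing it to consist of $k-1$ clique vertices, but every clique vertex lies on a cycle meeting $F$ only in itself (a clause or copy cycle through $f_1$ or $f_2$, and a restriction digon through each $f_j$ with $j\ge 3$), so the surviving clique vertex is left on a cycle. Since $|V(D)|=O(n+m)$, the reduction is linear and a $2^{o(n)}$ algorithm for the problem would, through Lemma~\ref{lem:restricted}, contradict the ETH. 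The step I expect to be most delicate is the verification that no unintended monochromatic cycle survives; the combination of fresh per-appearance copies, the opposite-color invariant, and the separation of clause-hubs from copy-hubs is what makes this work.
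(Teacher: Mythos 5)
Your proposal is correct, and at the top level it takes the same route as the paper: a reduction from \textsc{Restricted-3-SAT} in which a bidirected $k$-clique serves simultaneously as the color palette and as the feedback vertex set, digons to palette vertices restrict the colors available to other vertices, and every logical constraint is realized as a directed cycle through a palette ``hub'' so that properness of the coloring enforces it. Where you genuinely diverge is in the gadget design. The paper has no variable vertices at all: each clause is a directed path of literal-occurrence vertices interleaved with vertices forced to color $2$ (digons to $v_1$), closed into a cycle through $v_2$; consistency between the positive and the (unique) negative occurrence of a variable is enforced by extra vertices $w'$ forced to color $1$ and placed on arcs from each positive occurrence to the negative one, and the satisfying assignment is read off the negative occurrences. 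You instead introduce an explicit vertex $v_x$ per variable, which makes consistency of the extracted assignment automatic, replace the paper's forced path vertices by per-occurrence copies tied to $v_x$ via $3$-cycles through a hub, and split the clause cycles between hubs ($f_1$ for positive clauses, $f_2$ for negative ones), whereas the paper routes all clause cycles through $v_2$. What each buys: your version shortens the backward direction (no consistency-extraction argument), while the paper's version packs the literal semantics directly into the clause paths and needs fewer gadget types. Notably, both forward-direction proofs rest on the same key trick, which you correctly identified as the crux: the arcs linking gadgets are bichromatic by construction, so the linking vertices (your copies' arcs to variables, the paper's $w'$ vertices) drop out of every color class and no cross-gadget monochromatic cycle can arise. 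Two small points to tighten: your topological order of $D-F$ must order the copies of each clause along the clause path (not merely ``all copies before all variables''), and your argument that the minimum FVS is exactly $k$ needs the formula to contain at least one positive and one negative occurrence, which holds without loss of generality; both are trivial fixes.
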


\begin{proof}

We will prove the theorem for $k=2$. To obtain the proof for larger values one
can add to the construction $k-2$ vertices which are connected to everything
with digons: this increases both the dichromatic number and the feedback vertex
set by $k-2$. Note that this does indeed construct a ``palette'' clique of size
$k$, as indicated by Remark \ref{remark_fvs=k_digons=kchoose2}.

We make a reduction from \textsc{Restricted-3-SAT}, which is NP-hard by Lemma
\ref{lem:restricted}. Our reduction will produce an instance of size linear in
the input formula, which leads to the ETH-based lower bound. Let $\phi$ be the
given formula with variables $x_1,\ldots, x_n$, and suppose that clauses
$c_1,\ldots, c_{\ell}$ contain only positive literals, while clauses
$c_{\ell+1},\ldots, c_m$ contain only negative literals. We will assume without
loss of generality that all variables appear in $\phi$ both positive and
negative (otherwise $\phi$ can be simplified).

We begin by constructing two ``palette'' vertices $v_1,v_2$ which are connected
by a digon. Then, for each clause $c_i, i\in [m]$ we do the following: if the
clause has size three we construct a directed path with vertices $l_{i,1},
w_{i,1}, l_{i,2}, w_{i,2}, l_{i,3}$, where the vertices $l_{i,1}, l_{i,2},
l_{i,3}$ represent the literals of the clause; if the clause has size two we
similarly construct a directed path with vertices $l_{i,1}, w_{i,1}, l_{i,2}$,
where again $l_{i,1}, l_{i,2}$ represent the literals of the clause.

For each variable $x_j, j\in [n]$ we do the following: for each clause
$c_{i_1}$ where $x_j$ appears positive and clause $c_{i_2}$ where $x_j$ appears
negative we construct a vertex $w'_{j,i_1,i_2}$ and add an incoming arc from
the vertex that represents the literal $x_j$ in the directed path of $c_{i_1}$
to $w'_{j,i_1,i_2}$; and an outgoing arc from $w'_{j,i_1,i_2}$ to the vertex
that represents the literal $\neg x_j$ in the directed path of $c_{i_2}$. 

Finally, to complete the construction we connect the palette vertices to the
rest of the graph as follows: $v_1$ is connected with a digon to all existing
vertices $w_{i,j}$, $i\in [m], j\in [2]$; $v_2$ is connected with a digon to
all existing vertices $w'_{j,i_1,i_2}$; $v_2$ has an outgoing arc to the first
vertex of each directed path representing a clause and an incoming arc from the
last vertex of each such path; $v_1$ has an outgoing arc to all vertices that
represent positive literals and an incoming arc from all vertices representing
negative literals.  (See Figure \ref{graph_sat})

\begin{figure}[h]
\centering
\begin{tikzpicture}[scale=0.9, transform shape]
\tikzstyle{vertex}=[circle, draw, inner sep=2pt,  minimum width=1 pt, minimum size=0.1cm]
\tikzstyle{vertex1}=[circle, draw, inner sep=2pt, fill=black!100, minimum width=1pt, minimum size=0.1cm]

\begin{scope}[xshift=-0.5cm]

\node[] () at (10,6) {$(\alpha)$};

\node[vertex1] (vf) at (11.5,9) {};
\node[] () at (8.5,9.3) {$v_1$};

\node[vertex1] (vt) at (8.5,9) {};
\node[] () at (11.5,9.3) {$v_2$};

\draw[-triangle 45] (vf) edge [bend left=10] (vt);
\draw[-triangle 45] (vt) edge [bend left=10] (vf);

\node[vertex1] (l1) at (8,7) {};
\node[vertex1] (u1) at (9,7) {};
\node[vertex1] (l2) at (10,7) {};
\node[vertex1] (u2) at (11,7) {};
\node[vertex1] (l3) at (12,7) {};

\node[] () at (8,6.6) {$l_{i,1}$};
\node[] () at (9,6.6) {$w_{i,1}$};
\node[] () at (10,6.6) {$l_{i,2}$};
\node[] () at (11,6.6) {$w_{i,2}$};
\node[] () at (12,6.6) {$l_{i,3}$};

\draw[-triangle 45] (l1)--(u1);
\draw[-triangle 45] (u1)--(l2);
\draw[-triangle 45] (l2)--(u2);
\draw[-triangle 45] (u2)--(l3);
\draw[-triangle 45] (l3)--(vf);
\draw[-triangle 45] (vf)--(l1);

\draw[-triangle 45] (u1) edge [bend left=10] (vt);
\draw[-triangle 45] (vt) edge [bend left=10] (u1);

\draw[-triangle 45] (u2) edge [bend left=10] (vt);
\draw[-triangle 45] (vt) edge [bend left=10] (u2);

\end{scope}

\begin{scope}[xshift=4cm]

\node[vertex1] (vf) at (10.7,9) {};
\node[] () at (9.3,9.3) {$v_1$};

\node[vertex1] (vt) at (9.3,9) {};
\node[] () at (10.7,9.3) {$v_2$};

\draw[-triangle 45] (vf) edge [bend left=10] (vt);
\draw[-triangle 45] (vt) edge [bend left=10] (vf);

\node[vertex1] (l1) at (9,7) {};
\node[vertex1] (l2) at (10,7) {};
\node[vertex1] (l3) at (11,7) {};

\node[] () at (8.8,6.6) {$l'=x_j$};
\node[] () at (10,6.6) {$w'_{j,i_1,i_2}$};
\node[] () at (11.3,6.6) {$l=\neg x_j$};

\node[] () at (10,6) {$(\beta)$};

\draw[-triangle 45] (vt)--(l1);
\draw[-triangle 45] (l1)--(l2);
\draw[-triangle 45] (l2)--(l3);
\draw[-triangle 45] (l3)--(vt);

\draw[-triangle 45] (l2) edge [bend left=10] (vf);
\draw[-triangle 45] (vf) edge [bend left=10] (l2);

\end{scope}

\begin{scope}[xshift=8.5cm]

\node[] () at (10,6) {$(\gamma)$};

\node[vertex1] (l1) at (8,9) {};
\node[vertex1] (u1) at (9,9) {};
\node[vertex1] (l2) at (10,9) {};
\node[vertex1] (u2) at (11,9) {};
\node[vertex1] (l3) at (12,9) {};

\node[] () at (8,9.3) {$x_{1}$};
\node[] () at (9,9.3) {$w_{1,1}$};
\node[] () at (10,9.3) {$x_{2}$};
\node[] () at (11,9.3) {$w_{1,2}$};
\node[] () at (12,9.3) {$x_{3}$};

\node[vertex1] (nl1) at (8.5,7) {};
\node[vertex1] (nu1) at (10,7) {};
\node[vertex1] (nl2) at (11.5,7) {};

\node[] () at (8.5,6.6) {$\neg x_1$};
\node[] () at (10,6.6) {$w_{2,1}$};
\node[] () at (11.5,6.6) {$\neg x_2$};

\draw[-triangle 45] (nl1)--(nu1);
\draw[-triangle 45] (nu1)--(nl2);

\draw[-triangle 45] (l1)--(u1);
\draw[-triangle 45] (u1)--(l2);
\draw[-triangle 45] (l2)--(u2);
\draw[-triangle 45] (u2)--(l3);

\node[vertex1] (u11) at (8.25,8) {};
\node[vertex1] (u12) at (10.75,8) {};

\node[] () at (8.8,8) {$w'_{1,1,2}$};
\node[] () at (11.3,8) {$w'_{2,1,2}$};

\draw[-triangle 45] (l1)--(u11);
\draw[-triangle 45] (u11)--(nl1);
\draw[-triangle 45] (l2)--(u12);
\draw[-triangle 45] (u12)--(nl2);

\end{scope}

\end{tikzpicture}
\caption{$(\alpha)$: The cycles created by $\{v_1,v_2\}$ and clauses with three literals. $(\beta)$: The cycles created by $\{v_1,v_2\}$ and each pair $\{x,\neg x\}$. $(\gamma)$: An example digraph for the formula $\phi = (x_1 \vee x_2 \vee x_3) \wedge (\neg x_1 \vee \neg x_2)$, without showing $v_1,v_2$.}
\label{graph_sat}
\end{figure}
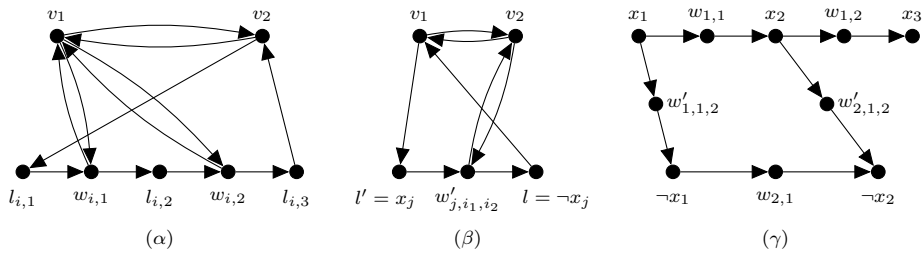\noindent

Let us now prove that this reduction implies the theorem. First, we claim that
in the digraph we constructed $\{v_1,v_2\}$ is a feedback vertex set. Indeed,
suppose we remove these two vertices. Now every arc in the remaining graph
either connects vertices that represent the same clause, or is incident on a
vertex $w'_{j,i_1,i_2}$. Observe that these vertices have only one incoming and
one outgoing arc and because of the ordering of the clauses $i_1< i_2$ (since
clauses that contain negative literals come later in the numbering). We
conclude that every directed path must either stay inside the path representing
the same clause or lead to a path the represents a later clause.  Hence, the
digraph is acyclic.

Let us now argue that if $\phi$ is satisfiable then the digraph is
$2$-colorable. We give color $1$ to $v_1$ and $2$ to $v_2$. We give color $2$
to each $w_{i,j}$ and color $1$ to each $w'_{j,i_1,i_2}$.  Fix a satisfying
assignment for $\phi$.  We give color $1$ to all  vertices $l_{i,j}$ that
represent literals set to True by the assignment and color $2$ to all remaining
vertices. Let us see why this coloring is acyclic. First, consider a vertex
$w'_{j,i_1,i_2}$. This vertex has color $1$ and one incoming and one outgoing
arc corresponding to opposite literals. Because the literals are opposite, one
of them has color $2$, hence $w'_{j,i_1,i_2}$ cannot be in any monochromatic
cycle and can be removed.  Now, suppose there is a monochromatic cycle of color
$1$. As $\{v_1,v_2\}$ is a feedback vertex set, this cycle must include $v_1$.
Since $v_2$ and all $w_{i,j}$ have color $2$ the vertex after $v_1$ in the
cycle must be some $l_{i,j}$ representing a positive literal which was set to
True by our assignment.  The only outgoing arc leaving from $l_{i,j}$ and going
to a vertex of color $1$ must lead it to a vertex $w'_{j',i,i'}$, which as we
said cannot be part of any cycle. Hence, no monochromatic cycle of color $1$
exists. Consider then a monochromatic cycle of color $2$, which must begin from
$v_2$. The next vertex on this cycle must be a $l_{i,1}$ and since we have
eliminated vertices $w'_{j,i_1,i_2}$ the cycle must continue in the directed
path of clause $i$. But, since we started with a satisfying assignment, at
least one of the literal vertices of this path has color $1$, meaning the cycle
cannot be monochromatic.

Finally, let us argue that if the digraph is $2$-colorable, then $\phi$ is
satisfiable. Consider a $2$-coloring which, without loss of generality, assigns
$1$ to $v_1$ and $2$ to $v_2$. The coloring must give color $2$ to all
$w_{i,j}$ and color $1$ to all $w_{j,i_1,i_2}$, because of the digons
connecting these vertices to the palette. Now, we obtain an assignment for
$\phi$ as follows: for each $x_j$, we find the vertex in our graph that
represents the literal $\neg x_j$ (this is unique since each variable appears
exactly once negatively): we assign $x_j$ to True if and only if this vertex has
color $2$.  Let us argue that this assignment satisfies all clauses. First,
consider a clause with all negative literals. If this clause is not satisfied,
then all the vertices representing its literals have color $2$. Because
vertices $w_{i,j}$ also all have color $2$, this creates a monochromatic cycle
with $v_2$, contradiction. Hence, all such clauses are satisfied. Second,
consider a clause $c_i$ with all positive literals. In the directed path
representing $c_i$ at least one literal vertex must have color $1$, otherwise
we would get a monochromatic cycle with $v_2$. Suppose this vertex represents
the literal $x_j$ and has an out-neighbor $w'_{j,i,i_2}$, which is colored $1$.
If the out-neighbor of $w'_{j,i_1,i_2}$ is also colored $1$, we get a
monochromatic cycle with $v_1$.  Therefore, that vertex, which represents the
literal $\neg x_j$ has color $2$.  But then, according to our assignment $x_j$
is True and $c_i$ is satisfied.  \end{proof}

The last result of this section concerns $2$-coloring of oriented graphs.

\begin{theorem} \label{theorem:oriented:2col:3fvs:NPhard}
It is $NP$-hard to decide if an oriented graph $D=(V,E)$ is $2$-colorable even when the size of its feedback vertex set is $3$.
\end{theorem}
\begin{proof}
We adapt the proof of Theorem~\ref{theorem:2fvs:2col:NPhard}. First let us give an intuition behind the gadget we are going to use. 
In the proof of Theorem~\ref{theorem:2fvs:2col:NPhard} the digraph we created is not an oriented graph as it contains digons. All the digons of that digraph are connected to vertices $v_1$ or $v_2$, and therefore, we want to replace $v_1$ and $v_2$ with a gadget that contains two arcs $t_1t_2$ and $f_1f_2$ such that the vertices $t_1$ and $t_2$ to have the same color as $v_1$ and the vertices $f_1$ and $f_2$ to have the same color as $v_2$.   
Then we can replace all cycles that contained $v_1$ (respectively, $v_2$) with cycles that contain the arc $t_1t_2$  (respectively, $f_1f_2$) and the rest of the proof will remain the same.

The gadget we use in place of  $\{v_1,v_2\}$ is the one in the Fig.~\ref{graph_red_oriented_sat}.
Furthermore, we will not use the digon between $v_1$ and $v_2$ and we replace all the other incoming arcs of $v_1$ from the previous construction with incoming arcs to $t_1$, the outgoing arcs of $v_1$ with outgoing arcs from $t_2$, the incoming arcs of $v_2$ with incoming arcs to $f_1$, the outgoing arcs of $v_2$ with outgoing arcs from $f_2$.
For example, the digon $v_1w_{i,1}$ in the gadget $(\alpha)$ from the previous theorem becomes a triangle  $t_1 t_2 w_{i,1}$.

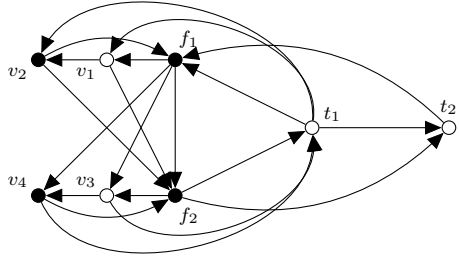
\begin{figure}[h]
\centering
\begin{tikzpicture}[scale=0.9, transform shape]
\tikzstyle{vertex}=[circle, draw, inner sep=2pt,  minimum width=1 pt, minimum size=0.1cm]
\tikzstyle{vertex1}=[circle, draw, inner sep=2pt, fill=black!100, minimum width=1pt, minimum size=0.1cm]

\node[vertex] (vt1) at (8,10) {};
\node[vertex] (vt2) at (10,10) {};

\node[] () at (8.3,10.2) {$t_1$};
\node[] () at (10,10.3) {$t_2$};

\node[vertex1] (vf1) at (6,9) {};
\node[vertex1] (vf2) at (6,11) {};

\node[] () at (6.2,8.7) {$f_2$};
\node[] () at (6.2,11.3) {$f_1$};

\node[vertex] (cf21) at (5,11) {};
\node[vertex1] (cf22) at (4,11) {};
\node[] () at (4.7,10.8) {$v_1$};
\node[] () at (3.7,10.8) {$v_2$};

\node[vertex] (cf11) at (5,9) {};
\node[vertex1] (cf12) at (4,9) {};
\node[] () at (4.7,9.2) {$v_3$};
\node[] () at (3.7,9.2) {$v_4$};

\draw[-triangle 45] (vt1)--(vt2);
\draw[-triangle 45] (vf2)--(vf1);
\draw[-triangle 45] (vf1)--(vt1);
\draw[-triangle 45] (vt1)--(vf2);

\draw[-triangle 45] (vf1) edge [bend right] (vt2);
\draw[-triangle 45] (vt2) edge [bend right] (vf2);

\draw[-triangle 45] (vf1)--(cf11);
\draw[-triangle 45] (cf11)--(cf12);
\draw[-triangle 45] (cf12) edge [bend right] (vf1);

\draw[-triangle 45] (vf2)--(cf21);
\draw[-triangle 45] (cf21)--(cf22);
\draw[-triangle 45] (cf22) edge [bend left] (vf2);

\draw[-triangle 45] (cf21)--(vf1);
\draw[-triangle 45] (cf22)--(vf1);
\draw[-triangle 45] (vt1) edge [bend right=80] (cf21);
\draw[-triangle 45] (vt1) edge [bend right=80] (cf22);

\draw[-triangle 45] (vf2)--(cf11);
\draw[-triangle 45] (vf2)--(cf12);
\draw[-triangle 45] (cf11) edge [bend right=80] (vt1);
\draw[-triangle 45] (cf12) edge [bend right=80] (vt1);

\node[] () at (7,8) {};

\end{tikzpicture}
\caption{Gadget $H$: It is $2$-colorable, and in any $2$-coloring both pairs $\{f_1,f_2\}$ and $\{t_1,t_2\}$ must be monochromatic but with different color per pair.
}\label{graph_red_oriented_sat}
\end{figure}\noindent

Now we need to show that in any proper $2$-coloring of this gadget 
both pairs ${f_1,f_2}$ and ${t_1,t_2}$ are monochromatic and we use different color per pair.

First observe that there exists such a coloring (see Fig.~\ref{graph_red_oriented_sat})
We will show that the vertices $f_1$ and $t_1$ cannot have the same color. Assume that they are both colored $0$; then the vertex $f_2$ must be colored $1$ because we have the cycle $f_1,f_2,t_1$. Because the vertex $f_2$ is colored $1$ and there exists the cycle $f_2,v_3,v_4$ we know that at least one of $v_3$, $v_4$ must be colored $0$. Let $v_3$ (respectively, $v_4$) be colored $0$, then the coloring is not proper because there exists the cycle $t_1,f_1,v_3$ (resp. $t_1,f_1,v_4$) with all the vertices colored $0$. This is a contradiction so the $f_1$ and $t_1$ cannot have the same color. Similarly we can prove that $f_2$ and $t_1$ cannot have the same color. So we must color the vertices $f_1$ and $f_2$ with one color and $t_1$ with the second. Furthermore because we have the cycle $f_1,f_2,t_2$, the vertex $t_2$ must use the same color as $t_1$.

It remains to show that the size of minimum feedback vertex set is at most $3$; observe that the set $\{f_1,f_2,t_1\}$ is a feedback vertex set (see Fig.~\ref{feedback_vertex_set}).

\begin{figure}[H]
\centering
\begin{tikzpicture}[scale=0.9, transform shape]
\tikzstyle{vertex}=[circle, draw, inner sep=2pt,  minimum width=1 pt, minimum size=0.1cm]
\tikzstyle{vertex1}=[circle, draw, inner sep=2pt, fill=black!100, minimum width=1pt, minimum size=0.1cm]

\node[vertex] (u1) at (10,10) {};
\node[] () at (10,10.3) {$t_2$};
\node[vertex] (u2) at (9,11) {};
\node[vertex1] (u3) at (8,11) {};
\node[] () at (8.7,10.8) {$v_1$};
\node[] () at (7.7,10.8) {$v_2$};

\node[vertex] (u4) at (9,9) {};
\node[vertex1] (u5) at (8,9) {};
\node[] () at (8.7,9.2) {$v_3$};
\node[] () at (7.7,9.2) {$v_4$};

\draw[-triangle 45] (u4)--(u5);
\draw[-triangle 45] (u2)--(u3);

\draw (12,11) -- (13,11);
\draw (12,9) -- (13,9);
\draw (12,11) -- (12,9);
\draw (13,11) -- (13,9);
\draw[-triangle 45] (12,10)--(u1);
\node[] () at (9,8) {$H$ without $\{f_1,f_2,t_1\}$};
\node[] () at (12.5,8) {remaining digraph};

\end{tikzpicture}
\caption{For the remaining digraph, it has been proved that is acyclic in the previous theorem so $\{f_1,f_2,t_1\}$ is a feedback vertex set of the whole digraph.}\label{feedback_vertex_set}
\end{figure}
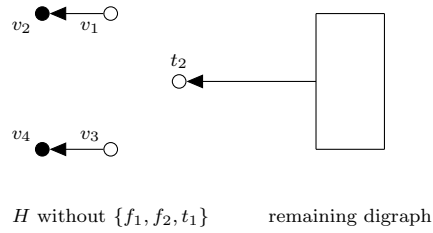\noindent

\end{proof}

This result is tight as, by Remark~\ref{remark_fvs=k_digons=kchoose2}, we know that oriented graphs with DFVS of size $k$ are $k$-colorable.

\section{Bounded Feedback Arc Set and Bounded Degree} \label{sec:bounded:fas:degree}

In this section we first present two algorithmic results: we show that
$k$-\textsc{Digraph Coloring} becomes FPT (by $k$) if either the input graph
has feedback vertex set $k$ and maximum degree at most $4k-3$; or if it has
feedback arc set at most $k^2-1$ (and unbounded degree). Interestingly, the
latter of these results is exactly tight and the former is almost tight: in the
second part we refine the reduction of the previous section to show that
$k$-\textsc{Digraph Coloring} is NP-hard for digraphs which have simlutaneously
a FAS of size $k^2$, a feedback vertex set of size $k$ and maximum  degree
$\Delta=4k-1$.

\subsection{Algorithmic Results}

Our first result shows that for $k$-\textsc{Digraph Coloring}, if we are
promised a feedback vertex set of size $k$ (which is the smallest value for
which the problem is non-trivial), then the problem remains tractable for
degree up to $4k-3$. Observe that in the case of general digraphs (where we
do not bound the feedback vertex set) the problem is already hard for maximum
degree $2k+2$ (see Other Related Work section), so this seems encouraging. However, we show in Theorem
\ref{theorem:fvs=k:degree=4k-1:NPhard} that this tractability cannot be
extended much further.

\begin{theorem}\label{thm:algfvs} Let $D=(V,E)$ be a digraph with feedback
vertex set $F$ of size $|F|=k$ and maximum degree $\Delta \le 4k-3$. Then, $D$
is $k$-colorable if and only if $D[N[F]]$ is $k$-colorable. Furthermore, a
$k$-coloring of $D[N[F]]$ can be extended to a $k$-coloring of $D$ in
polynomial time.  \end{theorem}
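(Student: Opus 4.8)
The plan is to handle the two directions separately, with essentially all the work in the ``if'' direction. The ``only if'' direction is immediate: restricting any proper $k$-coloring of $D$ to $N[F]$ yields a proper $k$-coloring of $D[N[F]]$, since a monochromatic cycle of the induced subdigraph would also be a monochromatic cycle of $D$. For the converse, I would first invoke Remark~\ref{lemma_fvs=k_digons=kchoose2}: if $F$ does not induce a bi-directed clique then $D$ is $k$-colorable outright, so I may assume that $F$ \emph{is} a bi-directed clique. In that case the $k$ vertices of $F$ must receive pairwise distinct colors in any proper $k$-coloring, and I write $f_c$ for the vertex of $F$ coloured $c$. Starting from a proper $k$-coloring of $D[N[F]]$, the task is to colour the remaining vertices $R:=V\setminus N[F]$ (using, where needed, the slack left on boundary vertices that are not pinned to a single colour) so as to obtain a proper $k$-coloring of all of $D$.

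The key structural observation locates all potential monochromatic cycles. Since $F$ is a feedback vertex set, $D-F$ is acyclic and every cycle of $D$ meets $F$; and since $F$ is a bi-directed clique, a monochromatic cycle of colour $c$ can meet $F$ only in $f_c$. Such a cycle therefore consists of $f_c$ together with a directed colour-$c$ path in the DAG $D-F$ running from an out-neighbour of $f_c$ to an in-neighbour of $f_c$. As every neighbour of $F$ lies in $N[F]$, both endpoints of this path are already-coloured vertices of $N[F]\setminus F$, while its internal vertices may lie in $R$; and such a path cannot stay inside $N[F]$, for otherwise $D[N[F]]$ would already contain a monochromatic cycle. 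Hence each dangerous path is forced to traverse $R$, and it suffices to colour $R$ so that, for every colour $c$, no monochromatic colour-$c$ path joins a colour-$c$ out-neighbour of $f_c$ to a colour-$c$ in-neighbour of $f_c$.

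To achieve this I would process $D-F$ in topological order, assigning to each vertex a colour that does not extend any monochromatic colour-$c$ path towards an in-neighbourhood of some $f_c$. The step I expect to be the main obstacle is proving that a legal colour always remains, and this is precisely where the degree bound enters and is tight. Each $f_c$ spends $2(k-1)$ of its degree on the clique digons, so it has at most $(4k-3)-2(k-1)=2k-1$ neighbours outside $F$; thus its in- and out-neighbourhoods outside $F$ together number at most $2k-1$, and in particular one of them has size at most $k-1$. The only vertices whose colour is genuinely constrained are the boundary vertices adjacent to $F$, and among these the delicate ones are those whose digons to $F$ pin their colour; a vertex joined to all of $F$ cannot be joined to all of $F$ by digons, since $F$ together with it would be a bi-directed clique on $k+1$ vertices, contradicting the $k$-colorability of $D[N[F]]$, so such a vertex is never pinned to a single colour. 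The heart of the argument is then a counting that uses the bound of $2k-1$ external neighbours per $f_c$ to prevent these pinned vertices from chaining into an \emph{unbreakable} monochromatic path, guaranteeing that every dangerous path contains a vertex still free to be (re)coloured without creating a cycle of another colour. Since the construction is a single topological pass performing $O(k)$ work per vertex, it runs in polynomial time, which also yields the ``furthermore'' part.
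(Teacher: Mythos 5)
Your setup is sound and matches the paper's: the ``only if'' direction is trivial, Remark~\ref{lemma_fvs=k_digons=kchoose2} reduces to the case where $F$ induces a bi-directed clique, and your localization of the danger is correct (a monochromatic colour-$c$ cycle must consist of $f_c$ together with a colour-$c$ path of the DAG $D-F$ from an out-neighbour of $f_c$ to an in-neighbour of $f_c$). But the proposal stops exactly where the theorem actually lives: you write that ``the step I expect to be the main obstacle is proving that a legal colour always remains'' in your topological-order greedy, and you never prove it. The per-vertex count you offer (each $f_c$ has at most $(4k-3)-2(k-1)=2k-1$ arcs leaving $F$, so one of its external in-/out-neighbourhoods has at most $k-1$ arcs) is far too weak to drive such a greedy: a vertex of $R=V\setminus N[F]$ can have in-degree up to $4k-3\ge k$, so monochromatic partial paths in all $k$ colours can arrive at it simultaneously, and a one-pass greedy has no foresight about which partial paths will later be completed through already-coloured boundary vertices. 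No invariant is stated under which a safe colour is guaranteed to exist, and it is not at all clear that one does. (There is also a small internal tension: your claim that every dangerous path must traverse $R$ relies on keeping the given colouring of $N[F]$ intact, yet you simultaneously reserve the right to recolour ``non-pinned'' boundary vertices.)

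The paper's proof avoids colouring $R$ vertex-by-vertex altogether, and this is the idea your proposal is missing. Its key claim is: if for some colour $i$ one of the sets $V_{i,in}$ (colour-$i$ in-neighbours of $v_i$) or $V_{i,out}$ (colour-$i$ out-neighbours of $v_i$) is empty, then giving \emph{all} of $V\setminus N[F]$ the single colour $i$ yields a proper colouring, because any monochromatic colour-$i$ cycle would have to pass through $v_i$, which then has no same-coloured in-neighbours or no same-coloured out-neighbours. The degree bound then enters not per vertex but in a sum over \emph{pairs}: defining the cross arcs $E_{i,j}$ between $\{v_i,v_j\}$ and $V_{i,in}\cup V_{i,out}\cup V_{j,in}\cup V_{j,out}$ with differently coloured endpoints, one gets $\sum_{i\in[k]} d(v_i)\ge 2k+k(2k-2)+\sum_{i<j}|E_{i,j}|$; if every pair had at least $4$ cross arcs this would force some vertex of $F$ to have degree at least $4k-2$, contradicting $\Delta\le 4k-3$. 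For a pair with $|E_{i,j}|\le 3$, a four-case recolouring of $V_{i,in}\cup V_{i,out}\cup V_{j,in}\cup V_{j,out}$ makes $v_i$ or $v_j$ one-sided within its colour class, and the claim finishes the proof. Without this global recolour-then-flood-fill step (or a genuine, proved replacement for it), your argument does not go through.
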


\begin{proof}

Let $D=(V,E)$ be such a digraph. If $D[N[F]]$ is not $k$-colorable, then $D$ is
not $k$-colorable, so we need to prove that if $D[N[F]]$ is $k$-colorable then
$D$ is $k$-colorable and we can extend this coloring to $D$.  Assume that
$D[N[F]]$ is $k$-colorable. By Remark~\ref{remark_fvs=k_digons=kchoose2} we can
assume that $D[F]$ is a bi-directed clique. Let $c:N[F] \to [k]$ be the assumed
$k$-coloring and without loss of generality say that $F=\{v_1,\ldots, v_k\}$
and $c(v_i)=i$ for all $i\in [k]$.

Before we continue let us define the following sets of vertices:  we will call
$V_{i,in}$ the set of vertices $v \in N[F] \setminus F$ such that $c(v)=i$ and
there exists an arc $vv_i \in E$. Similarly we will call $V_{i,out}$ the set of
vertices $v \in N[F] \setminus F$ where $c(v)=i$ and there exists an arc $v_iv
\in E$. The sets $V_{i,in}$ and $V_{i,out}$ are disjoint in any proper coloring
(otherwise we would have a monochromatic digon). Furthermore, $V_{i,in}\cup
V_{i,out}$ is disjoint from $V_{j,in}\cup V_{j,out}$ for $j\neq i$ (because
their vertices have different colors), so all these $2k$ sets are pair-wise
disjoint.  We first show that if one of these $2k$ sets is empty, then we can
color $D$.

\begin{claim} \label{claim:i:j:neighborhood} If for some $i \in [k]$ one of the
sets $V_{i,in}$, $V_{i,out}$ is empty then we can extend $c$ to a $k$-coloring
of $D$ in polynomial time.  \end{claim} 

\begin{proof} We keep $c$ unchanged and color all of $V(D)\setminus N[F]$ with
color $i$. This is a proper $k$-coloring. Indeed, this cannot create a
monochromatic cycle with color $j\neq i$. Furthermore, if a monochromatic cycle
of color $i$ exists, since this cycle must intersect $F$, we conclude that it
must contain $v_i$. However, in the current $k$-coloring $v_i$ either has
in-degree or out-degree $0$ in the vertices colored $i$, so no monochromatic
cycle can go through it.  \end{proof}

In the remainder we assume that all sets $V_{i,in}, V_{i,out}$ are non-empty.
Our strategy will be to edit the $k$-coloring of $D[N[F]]$ so that we retain a
proper $k$-coloring, but one of these $2k$ sets becomes empty. We will then
invoke Claim \ref{claim:i:j:neighborhood} to complete the proof.

We now define, for each pair $i,j \in[k] $ with $i<j$ the set $E_{i,j}$ which
contains all arcs with one endpoint in $\{v_i,v_j\}$ and the other in
$V_{i,in}\cup V_{i,out}\cup V_{j,in}\cup V_{j,out}$ and whose endpoints have
\emph{distinct} colors. We call $E_{i,j}$ the set of \emph{cross} arcs for the
pair $(i,j)$. We will now argue that for some pair $(i,j)$ we must have
$|E_{i,j}|\le 3$. For the sake of contradiction, assume that $|E_{i,j}|\ge 4$
for all pairs. Then, by summing up the degrees of vertices of $F$ we have:

$$ \sum_{i \in [k]} d(v_i) \geq 2k + k(2k-2) + \sum_{i,j\in [k], i<j} |E_{i,j}|
\geq 2k^2 + 4{k\choose 2} = 4k^2-2k$$

In the first inequality we used the fact that each $v_i\in F$ has at least two
arcs connecting it to $V_{i,in}\cup V_{i,out}$ (since these sets are
non-empty); $2k-2$ arcs connecting it to other vertices of $F$ (since $F$ is a
clique); and each cross arc of a set $E_{i,j}$ contributes one to the degree of
one vertex of $F$. From this calculation we infer that the average degree of
$F$ is at least $4k-2$, which is a contradiction, since we assumed that the
digraph has maximum degre $4k-3$.

Fix now $i,j$ such that $|E_{i,j}|\le 3$. We will recolor $V_{i,in}\cup
V_{i,out}\cup V_{j,in} \cup V_{j,out}$ in a way that allows us to invoke Claim
\ref{claim:i:j:neighborhood}. Since we do not change any other color, we will
only need to prove that our recoloring does not create monochromatic cycles of
colors $i$ or $j$ in $D[N[F]]$. We can assume that $|E_{i,j}|=3$, since if
$|E_{i,j}|<3$ we can add an arbitrary missing cross arc and this can only make
the recoloring process harder. Furthermore, without loss of generality, we
assume that $v_i$ has strictly more cross arcs of $E_{i,j}$ incident to it than
$v_j$.

We now have to make a case analysis. First, suppose all three arcs of $E_{i,j}$
are incident on $v_i$. Then, there exists a set among $V_{j,in}, V_{j,out}$
that has at most one arc connecting it to $v_i$. We color this set $i$, and
leave the other set colored $j$. We also color $V_{i,in}\cup V_{i,out}$ with
$j$.  This creates no monochromatic cycle because: (i) $v_i$ now has at most
one neighbor colored $i$ in $V_{i,in}\cup V_{i,out}\cup V_{j,in} \cup
V_{j,out}$, so no monochromatic cycle goes through $v_i$; (ii) $v_j$ has either
no out-neighbors or no in-neighbors colored $j$ in $V_{i,in}\cup V_{i,out}\cup
V_{j,in} \cup V_{j,out}$.  With the new coloring we can invoke Claim
\ref{claim:i:j:neighborhood}. In the remainder we therefore assume that two
arcs of $E_{i,j}$ are incident on $v_i$ and one is incident on $v_j$.

Second, suppose that one of $V_{j,in}, V_{j,out}$ has no arcs connecting it to
$v_i$. We color this set $i$ and leave the other set colored $j$. Observe that
one of $V_{i,in}, V_{i,out}$ has no arc connecting it to $v_j$. We color that
set $j$ and leave the other set colored $i$. In the new coloring both $v_i,
v_j$ either have no out-neighbor or no in-neighbor with the same color in
$V_{i,in}\cup V_{i,out}\cup V_{j,in}\cup V_{j,out}$, so the coloring is proper
and we can invoke Claim \ref{claim:i:j:neighborhood}. In the remainder we
assume that $v_i$ has one arc connecting it to each of $V_{j,in}, V_{j,out}$.

Third, suppose that both arcs of $E_{i,j}$ incident on $v_i$ have the same
direction (into or out of $v_i$). We then color $V_{i,in}\cup V_{i,out}$ with
$j$ and $V_{j,in}\cup V_{j,out}$ with $i$. In the new coloring $v_j$ has at
most one neighbor with the same color and $v_i$ has either only in-neighbors or
only out-neighbors with color $i$, so the coloring is acyclic and we again
invoke Claim \ref{claim:i:j:neighborhood}.

Finally, we have the case where two arcs of $E_{i,j}$ are incident on $v_i$,
they have different directions, one has its other endpoint in $V_{j,in}$ and
the other in $V_{j,out}$.  Observe that one of $V_{i,in}, V_{i,out}$ has no arc
connecting it to $v_j$ and suppose without loss of generality that it is
$V_{i,in}$ (the other case is symmetric).  We color $V_{i,in}$ with $j$ and
leave $V_{i,out}$ with color $i$. One of $V_{j,in}, V_{j,out}$ has an incoming
arc from $v_i$; we color this set $i$ and leave the other colored $j$. Now,
$v_i$ only has out-neighbors with color $i$, while $v_j$ has at either only
in-neighbors or only out-neighbors colored $j$, so we are done in this final
case.  \end{proof}

Our second result concerns a parameter more restricted than feedback vertex
set, namely feedback arc set. Note that, in a sense, the class of graphs of
bounded feedback arc set contains the class of graphs that have bounded
feedback vertex set and bounded degree (selecting all incoming or outgoing arcs
of each vertex of a feedback vertex set produces a feedback arc set), so the
following theorem may seem more general.  However, a closer look reveals that
the following result is incomparable to Theorem \ref{thm:algfvs}, because
graphs of feedback vertex set $k$ and maximum degree $4k-3$ could have feedback
arc set of size up to almost $2k^2$ (consider for example a bi-direction of the
complete graph $K_{k,2k-2}$), while the following theorem is able to handle
graphs of unbounded degree but feedback arc set up to (only) $k^2-1$. As we
show in Theorem \ref{theorem:fvs=k:degree=4k-1:NPhard}, this is tight.

\begin{theorem} \label{theorem:k-col:fas<k^2:poly} Let $D$ be a digraph with a
feedback arc set $F$ of size at most $k^2-1$. Then $D$ is $k$-colorable if and
only if $D[V(F)]$ is $k$-colorable, and such a coloring can be extended to $D$
in polynomial time. \end{theorem}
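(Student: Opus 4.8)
The plan is to prove the non-trivial direction: given a proper $k$-coloring $c$ of $D[V(F)]$, extend it to a proper $k$-coloring of $D$. The converse is immediate, since restricting a proper coloring to an induced subdigraph keeps every color class acyclic. Note also that $|V(F)|\le 2|F|\le 2(k^2-1)$, so testing $k$-colorability of $D[V(F)]$ by brute force over all $k^{|V(F)|}$ colorings takes time depending only on $k$; together with the extension this is exactly what yields the claimed FPT algorithm. First I would fix a topological ordering $\prec$ of the acyclic digraph $D-F$ and record two structural facts. Since $F$ is a feedback arc set, every directed cycle of $D$ uses at least one arc of $F$; hence any monochromatic cycle uses a \emph{monochromatic} arc of $F$, i.e.\ an arc whose two endpoints get the same color. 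As both endpoints of every arc of $F$ lie in $V(F)$, the set of monochromatic $F$-arcs is determined entirely by $c$ and is unaffected by how we color $X:=V(D)\setminus V(F)$. Moreover, since $c$ is proper on $D[V(F)]$, no monochromatic cycle lives entirely inside $V(F)$, so in any extension every monochromatic cycle must pass through a vertex of $X$.

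The main idea is a counting step. There are exactly $k^2$ ordered pairs of colors while $|F|\le k^2-1$, so some ordered pair $(a,b)$ is \emph{not} realized by any arc of $F$. The favorable situation is when we can take $a=b$, i.e.\ some color $t$ carries no monochromatic $F$-arc; by Remark~\ref{lemma_fvs=k_digons=kchoose2} we may also assume $D[F]$ is a bi-directed clique, so this genuinely is the decisive structure. In that case I would give color $t$ to every vertex of $X$. This is proper: a monochromatic cycle of color $t'\neq t$ would be confined to $V(F)$ (all new vertices have color $t$) and is excluded by properness of $c$, while a monochromatic cycle of color $t$ would have to use a monochromatic $F$-arc of color $t$, of which there are none. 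Thus $D$ is properly $k$-colored, and the extension is computed in polynomial time.

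The hard part is the remaining case, in which \emph{every} color already carries a monochromatic $F$-arc -- this can happen, as $k$ such arcs fit inside the budget $k^2-1$ -- and the only free pair $(a,b)$ is off-diagonal. A single color no longer suffices, because coloring all of $X$ with a color $t$ may close a return path through $X$ for one of the backward (with respect to $\prec$) monochromatic $F$-arcs of color $t$. My plan is to exploit the absent pair $(a,b)$ together with the acyclicity of $D-F$ to color $X$ with a careful \emph{mixture} of colors that makes every color class acyclic at once. For a fixed color $t$, a cycle in its class can appear only through a backward monochromatic $F$-arc $u\to v$ (with $v\prec u$) completed by a $\prec$-increasing path from $v$ to $u$ whose every vertex has color $t$; properness of $c$ guarantees that no such path lies entirely in $V(F)$, so each potential monochromatic return path contains a vertex of $X$ that we are free to recolor. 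The delicate point, and what I expect to be the main obstacle, is to perform these recolorings \emph{consistently across all colors simultaneously} (including cycles that use several monochromatic arcs of the same color); this is precisely where the hypothesis $|F|\le k^2-1$ is used, the existence of the unrealized color pair being exactly what rules out an unavoidable conflict -- matching the fact, established in Theorem~\ref{theorem:fvs=k:degree=4k-1:NPhard}, that the problem becomes NP-hard as soon as $|F|=k^2$.
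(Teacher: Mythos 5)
Your first case (some color $t$ carries no monochromatic arc of $F$: give all of $X=V(D)\setminus V(F)$ color $t$) is correct. But the remaining case is not merely ``delicate'': the strategy you commit to there --- keep $c$ fixed on $V(F)$ and only choose colors for $X$ --- cannot work, because the statement is false under that reading. Concretely, take $k=2$ and the digraph on vertices $u_1,v_1,u_2,v_2,w$ with arcs $u_1v_1$ and $u_2v_2$ (these two arcs form $F$, and $|F|=2\le k^2-1=3$) together with $v_1w$, $wu_1$, $v_2w$, $wu_2$; here $D-F$ is acyclic, so $F$ is a feedback arc set. The coloring $c(u_1)=c(v_1)=1$, $c(u_2)=c(v_2)=2$ is a proper $2$-coloring of $D[V(F)]$ (each class induces a single arc), and both off-diagonal pairs $(1,2)$ and $(2,1)$ are unrealized by $F$, so your ``absent pair'' hypothesis holds; yet neither color can be given to $w$, since color $1$ closes the monochromatic triangle $u_1v_1w$ and color $2$ closes $u_2v_2w$. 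So no ``mixture of colors on $X$'' exists, and no counting of unrealized pairs can produce one: in this case one must \emph{recolor} $V(F)$. (This also shows that ``extended'' in the statement must be read as ``a $k$-coloring of $D$ can be computed from one of $D[V(F)]$,'' which is what the paper actually proves.) A side issue: your appeal to Remark~\ref{lemma_fvs=k_digons=kchoose2} is misplaced, since that remark concerns feedback \emph{vertex} sets; here $F$ is a set of arcs and ``$D[F]$ is a bi-directed clique'' does not typecheck.

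For comparison, the paper's proof is an induction on $k$ that explicitly recolors. If some color class $V_k$ of $c$ is incident to at least $2k-1$ arcs of $F$, delete it: the remaining arcs of $F$ form a feedback arc set of $D-V_k$ of size at most $k^2-1-(2k-1)=(k-1)^2-1$, so by induction $D-V_k$ is $(k-1)$-colorable, and $V_k$ (which induces an acyclic subdigraph) receives the spare color $k$. Otherwise every class meets at most $2k-2$ arcs of $F$; then each class contains at most one vertex incident to at least $k$ arcs of $F$ (two such vertices in one class, having no digon between them, would already account for $2k-1$ arcs), so there are at most $k$ such heavy vertices, and one can recolor $V(F)$ from scratch: distinct colors on the heavy vertices, then greedily on the rest, each vertex avoiding the colors of the other endpoints of its at most $k-1$ incident $F$-arcs. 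This makes \emph{every} arc of $F$ bichromatic, after which any coloring of $V(D)\setminus V(F)$ works, since a monochromatic cycle would have to use an arc of $F$. That second case --- abandoning $c$ and recoloring $V(F)$ when every color carries a monochromatic $F$-arc --- is exactly the ingredient your plan is missing.
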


\begin{proof} 

It is obvious that if $D[V(F)]$ is not $k$-colorable then $D$ is not
$k$-colorable.  We will prove the converse by induction.  For $k=1$ it is trivial
to see that if $|F|=0$ then $D$ is acyclic so is $1$-colorable. Assume then
that any digraph $D$ with a feedback arc set $F$ of size at most $(k-1)^2-1$ is
$(k-1)$-colorable, if and only if $D[V(F)]$ is $(k-1)$-colorable.

Suppose now that we have $D$ with a feedback arc set $F$ with $|F|\le k^2-1$
and we find that $D[V(F)]$ is $k$-colorable (this can be tested in $2^{O(k^2)}$
time). Let $c:V(F)\to [k]$ be a coloring of $V(F)$.  We distinguish two cases:

\textit{Case 1.} There exists a color class (say $V_k$) such that at least
$2k-1$ arcs of $F$ are incident on $V_k$. Then $D - V_k$ has a feedback
arc set of size at most $|F| -(2k-1) \le k^2-1-(2k-1) \le (k-1)^2-1$ and
$V_1,\ldots,V_{k-1}$ remains a valid coloring of the remainder of $V(F)$. So by
inductive hypothesis $D - V_k$ has a $(k-1)$-coloring.  By using the
color $k$ on $V_k$ we have a $k$-coloring for $D$.

\textit{Case 2.} Each color class is incident on at most $2k-2$ arcs of F. We
then claim that there is a way to color $V(F)$ so that all arcs of $F$ have
distinct colors on their endpoints. If we achieve this, we can trivially extend
the coloring to the rest of the graph, as arcs of $F$ become irrelevant.  Now,
let us call $v\in V(F)$ as \textit{type one} if $v$ is incident on at least $k$
arcs of $F$. We will show that there is at most one \textit{type one} vertex in
each color class. Indeed,  if $u,v\in V_i$ are both \textit{type one}, then
they are incident on at least $2k-1$ arcs of $F$ (there is no digon between $u$
and $v$ because they share a color), which we assumed is not the case, as $V_i$
is incident on at most $2k-2$ arcs of $F$.  Therefore, we can use $k$ distinct
colors to color all the \textit{type one} vertices of $V(F)$. Each remaining
vertex of $V(F)$ is incident on at most $k-1$ arcs of $F$. We consider these
vertices in some arbitrary order, and give each a color that does not already
appear on the other endpoints of its incident arcs from $F$.  Such a color
always exists, and proceeding this way we color all arcs of $F$ with distinct
colors.  This completes the proof. \end{proof}

\subsection{Hardness}

In this section we improve upon our previous reduction by producing a graph
which has bounded degree and bounded feedback arc set. Our goal is to do this
efficiently enough to (almost) match the algorithmic bounds given in the
previous section. 

\begin{theorem}\label{theorem:fvs=k:degree=4k-1:NPhard} For all $k\ge 2$, it is
$NP$-hard to decide if a digraph $D=(V,E)$ is $k$-colorable, even if $D$ has a
feedback vertex set of size $k$, a feedback arc set of size $k^2$, and maximum
degree $\Delta=4k-1$.  \end{theorem}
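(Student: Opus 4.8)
The plan is to refine the reduction behind Theorem~\ref{theorem:2fvs:2col:NPhard}, again reducing from \textsc{Restricted-3-SAT} (Lemma~\ref{lem:restricted}) with an output of size linear in $\phi$ so that the ETH-based lower bound survives. The guiding observation is that in the previous construction \emph{only} the palette vertices $v_1,v_2$ had unbounded degree: every clause-path vertex, every $w_{i,j}$, and every $w'_{j,i_1,i_2}$ already had degree at most $4$. Hence the whole task is to redistribute the palette's connections. I would preserve the structural principle that forces $\vec{\chi}(D)>k$, namely that a minimum feedback vertex set must induce a bi-directed clique (Remark~\ref{lemma_fvs=k_digons=kchoose2}); so the palette stays a bi-directed $K_k$ serving as the size-$k$ feedback vertex set. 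This already consumes degree $2(k-1)$ and $\binom{k}{2}$ feedback arcs at the palette, leaving a budget of exactly $2k+1$ incident gadget-arcs per palette vertex (so that $2(k-1)+(2k+1)=4k-1$) and $k^2-\binom{k}{2}=\binom{k+1}{2}$ further feedback arcs for the rest of the graph.

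The key difficulty, and the reason the earlier construction is so wasteful, is that its palette plays two expensive roles. First, it \emph{forces} the auxiliary vertices to fixed colors (each $w_{i,j}$ to color $2$, each $w'_{j,i_1,i_2}$ to color $1$) through digons to the palette; but every such forcing digon is a $2$-cycle arc-disjoint from the others, so forcing $\Theta(n+m)$ vertices this way needs $\Theta(n+m)$ feedback arcs. Second, it closes a separate cycle for every clause and every variable, costing one back-arc apiece. Neither is affordable within an $O(k^2)$ feedback-arc budget. My plan is therefore to let each palette vertex touch only a bounded number ($\le 2k+1$) of ``port'' vertices, and to assemble all clause- and variable-gadgets into a single large \emph{acyclic} subdigraph hanging off these ports, arranged so that: (i) deleting the $k$ palette vertices leaves a DAG, giving DFVS exactly $k$; (ii) the only arcs whose deletion is needed to destroy all cycles are the $O(k^2)$ arcs incident to the palette, giving FAS exactly $k^2$; and (iii) every vertex, palette included, has degree at most $4k-1$. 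The central design constraint is that color-forcing and cycle-closing can no longer be performed per gadget, but must be routed through the bounded set of ports, so that the many cycles all share the few palette-incident back-arcs.

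Correctness would then follow the template of Theorem~\ref{theorem:2fvs:2col:NPhard}. In the forward direction, from a satisfying assignment I would color the palette with the $k$ distinct colors, color the ports and the forced auxiliary vertices as dictated by the clique, and color the literal vertices by truth value, arguing exactly as before that a monochromatic cycle of color $i$ would have to pass through the palette vertex $v_i$ and be blocked either at an opposite-literal vertex (variable consistency) or at a true-literal vertex (clause satisfaction). Conversely, from a proper $k$-coloring I would use the palette digons and the acyclic port structure to deduce that the auxiliary vertices are colored as intended, and then read off a consistent, clause-satisfying assignment by inspecting the (unique negative) occurrence of each variable, just as in the original argument.

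The hard part will be meeting all three bounds \emph{simultaneously} and tightly. The degree-splitting of the palette must not introduce a single cycle avoiding the palette (which would raise the DFVS above $k$) nor a single extra back-arc (which would raise the FAS above $k^2$); in particular, reconciling the forcing of the auxiliary vertices---done cheaply by digons in the previous reduction---with the tight $k^2$ feedback-arc budget is the main technical obstacle, and it is exactly here that the threshold value $k^2$, matching the FPT result of Theorem~\ref{theorem:k-col:fas<k^2:poly}, must emerge. A secondary point is the lift to general $k$: the earlier trick of adding $k-2$ vertices joined by digons to everything would destroy the degree bound, so I expect the reduction to be carried out directly for arbitrary $k$ with the bi-directed $K_k$ palette, rather than obtained from the $k=2$ case.
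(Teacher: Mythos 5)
Your diagnosis of why the construction of Theorem \ref{theorem:2fvs:2col:NPhard} is too wasteful is exactly right (the palette has unbounded degree, and both the digon-based color forcing and the per-clause/per-variable cycle closing each consume $\Theta(n+m)$ feedback arcs), and your target invariants (i)--(iii) are precisely the ones the paper's construction achieves. However, the proposal stops exactly where the proof has to start: you state that color forcing and cycle closing ``must be routed through the bounded set of ports,'' and then declare this routing to be ``the main technical obstacle'' without resolving it. That obstacle \emph{is} the theorem. As written, you give no mechanism by which $k$ palette vertices of degree at most $4k-1$ can force prescribed colors on $\Theta(n+m)$ auxiliary vertices, let alone one whose forcing cycles all pass through the palette (so that the DFVS stays $k$ and the only back-arcs needed are $k^2$ arcs incident to the clique). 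Consequently neither direction of the correctness argument, nor any of the three parameter bounds, can actually be carried out from what you wrote.

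The missing idea is a color-propagation gadget that replicates the palette at bounded degree. The paper builds a bi-directed clique on $v^1,\ldots,v^k$ together with $M$ rows, one row per vertex of the main part, where row $\ell$ has $2k$ vertices $v^i_{\ell,in},v^i_{\ell,out}$; row $1$ is wired to the clique, and row $\ell+1$ to row $\ell$, by arcs oriented from $out$- to $in$-vertices and from higher to lower index within a row, with the vertices $v^i_{\ell,out}$ forming a directed path leaving $v^i$ and the vertices $v^i_{\ell,in}$ a directed path entering $v^i$. Two facts then replace your unspecified routing. First, a double induction (over rows, and within each row forward on $i$ for the $in$-vertices and backward on $i$ for the $out$-vertices) shows that in \emph{every} $k$-coloring each $v^i_{\ell,in},v^i_{\ell,out}$ is forced to the color of $v^i$, all forcing cycles passing through the clique, so no digons are needed. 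Second, each color class $i$ then contains monochromatic paths from $v^i$ to every $v^i_{\ell,out}$ and from every $v^i_{\ell,in}$ to $v^i$, so replacing an old arc $v_iu$ (respectively $uv_i$) by $v^i_{\ell,out}u$ (respectively $uv^i_{\ell,in}$), where $\ell$ is the row assigned to $u$, preserves exactly the monochromatic cycles through $v_i$ in both directions of the equivalence. Deleting the clique destroys all paths from $in$- to $out$-vertices, so the gadget minus the clique is a DAG and no cycle avoids the clique (DFVS $=k$); the arcs entering clique vertices, namely the $k(k-1)/2$ downward clique arcs plus the $k(k-1)/2+k$ arcs from the row-$1$ $in$-vertices, form a feedback arc set of size exactly $k^2$; and every gadget vertex has degree $4k-2$, leaving one spare arc for its attachment to the main part. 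Note finally that the paper handles general $k$ simply by building the Theorem \ref{theorem:2fvs:2col:NPhard} graph (whose palette is already a bi-directed $k$-clique) and swapping the entire palette for this gadget, so the universal vertices you were worried about disappear along with their digons; your plan of redoing the reduction from scratch for each $k$ is not needed.
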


\begin{proof}

Recall that in the proof of Theorem \ref{theorem:2fvs:2col:NPhard} for $k\ge 2$
we construct a graph that is made up of two parts: the palette part, which is a
bi-directed clique that contains $v_1, v_2$ and the $k-2$ vertices we have
possibly added to increase the chromatic number (call them $v_3,\ldots, v_k$);
and the part that represents the formula. We perform the same reduction except
that we will now edit the graph to reduce its degree and its feedback arc set.
In particular, we delete the palette vertices and replace them with a gadget
that we describe below. 

We construct a new palette that will be a bi-directed clique of size $k$, whose
vertices are now labeled $v^i, i\in[k]$. Let $M$ be the number of vertices of
the graph we constructed for Theorem \ref{theorem:2fvs:2col:NPhard}. We
construct $M$ ``rows'' of $2k$ vertices each. More precisely, for each $\ell
\in [M], i\in [k]$ we construct the two vertices $v^i_{\ell,in}, v^i_{\ell,
out}$.  In the remainder, when we refer to row $\ell$, we mean the set $\{
v^i_{\ell, in}, v^i_{\ell,out}\ |\ i\in [k]\}$. For all $i,j\in [k], i<j$ we
connect the vertices of row $1$ to the vertices of the clique as shown in
Figure \ref{fig:graph_H_0}. For all $i,j\in [k], i<j$ and $\ell \in [M-1]$ we
connect the vertices of rows $\ell, \ell+1$ as shown in Figure
\ref{fig:graph_H_i+1}.

In more detail we have:

\begin{enumerate}

\item For each $i\in [k]$ the vertex $v^i$ has an arc to all $v^j_{1,out}$ for
$j\ge i$, an arc to $v^j_{1,in}$ for all $j\neq i$, and an arc from
$v^j_{1,in}$ for all $j\le i$. 

\item For each $\ell \in [M]$, for all $i<j$ we have the following four arcs:
$v^j_{\ell,out}v^i_{\ell,out}$, $v^i_{\ell,out}v^j_{\ell,in}$,
$v^j_{\ell,in}v^i_{\ell,in}$, and $v^j_{\ell,out}v^i_{\ell,in}$. As a result,
inside a row arcs are oriented from $out$ to $in$ vertices; and between
vertices of the same type from larger to smaller indices $i$.

\item For each $\ell \in [M-1]$, for all $i\in [k]$ we have the arcs
$v^i_{\ell,out}v^i_{\ell+1,out}$ and $v^i_{\ell+1,in}v^i_{\ell,in}$. As a
result, the $v^i_{\ell,out}$ vertices form a directed path going out of $v^i$
and the $v^i_{\ell,in}$ vertices form a directed path going into $v^i$.

\item For each $\ell \in [M-1]$, for all $i,j \in [k]$ with $i<j$ we have the
arcs $v^i_{\ell,out}v^j_{\ell+1,in}$, $v^i_{\ell,out}v^j_{\ell+1,out}$,
$v^i_{\ell+1,in}v^j_{\ell,in}$, $v^j_{\ell,out}v^i_{\ell+1,in}$. Again, arcs
incident on an $out$ and an $in$ vertex are oriented towards the $in$ vertex.

\end{enumerate}

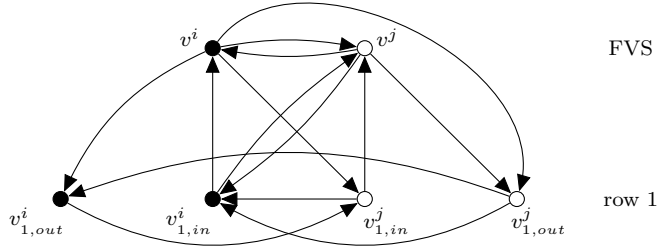
\begin{figure}[ht] \centering \begin{tikzpicture}[scale=1.0, transform shape]
\tikzstyle{vertex}=[circle, draw, inner sep=2pt,  minimum width=1 pt, minimum
size=0.1cm] \tikzstyle{vertex1}=[circle, draw, inner sep=2pt, fill=black!100,
minimum width=1pt, minimum size=0.1cm]

\begin{scope}[xshift=6cm]

\node[vertex1] (v) at (4,11) {};
\node[vertex] (u) at (6,11) {};

\node[vertex1] (v1) at (4,9) {};
\node[vertex] (u1) at (6,9) {};

\node[vertex1] (v2) at (2,9) {};
\node[vertex] (u2) at (8,9) {};

\node[] () at (9.5,9) {row 1};
\node[] () at (9.5,11) {FVS};

\node[] () at (3.7,11.15) {$v^i$};
\node[] () at (6.3,11.15) {$v^j$};
\node[] () at (3.7,8.7) {$v^i_{_{1,in}}$};
\node[] () at (6.3,8.7) {$v^j_{_{1,in}}$};
\node[] () at (1.7,8.7) {$v^i_{_{1,out}}$};
\node[] () at (8.3,8.7) {$v^j_{_{1,out}}$};

\draw[-triangle 45] (v) edge [bend left=10] (u);
\draw[-triangle 45] (u) edge [bend left=10] (v);

\draw[-triangle 45] (v) -- (u1);

\draw[-triangle 45] (v1) -- (v);

\draw[-triangle 45] (u1) -- (u);

\draw[-triangle 45] (v2) edge [bend right=30] (u1);

\draw[-triangle 45] (u2) edge [bend left=30] (v1);

\draw[-triangle 45] (u1) -- (v1);


\draw[-triangle 45] (v) edge [bend left=80] (u2);

\draw[-triangle 45] (u2) edge [bend right=20] (v2);

\draw[-triangle 45] (v1) edge [bend left=10] (u);
\draw[-triangle 45] (u) edge [bend left=10] (v1);

\draw[-triangle 45] (v)  edge [bend right=20] (v2);
\draw[-triangle 45] (u) -- (u2);

\end{scope}

\end{tikzpicture}
\caption{Graph showing the connections between two vertices of the clique palette ($v^i, v^j$, where $i<j$) and the corresponding vertices of row $1$.} \label{fig:graph_H_0} \end{figure}

\begin{figure}[ht]
\centering
\begin{tikzpicture}[scale=1.0, transform shape]
\tikzstyle{vertex}=[circle, draw, inner sep=2pt,  minimum width=1 pt, minimum size=0.1cm]
\tikzstyle{vertex1}=[circle, draw, inner sep=2pt, fill=black!100, minimum width=1pt, minimum size=0.1cm]

\begin{scope}[xshift=6cm]

\node[vertex1] (v1) at (4,9) {};
\node[vertex] (u1) at (6,9) {};

\node[vertex] (u2) at (8,9) {};
\node[vertex1] (v2) at (2,9) {};

\node[] () at (9.5,9.2) {row $\ell$};
\node[] () at (9.5,8.8) {(for $\ell\geq 1$)};

\node[] () at (9.5,7) {row $\ell+1$};

\node[vertex] (u11) at (6,7) {};
\node[vertex] (u12) at (8,7) {};
\node[vertex1] (v12) at (2,7) {};
\node[vertex1] (v11) at (4,7) {};

\node[] () at (3.6,9.1) {$v^i_{_{\ell,in}}$};
\node[] () at (6.4,9.1) {$v^j_{_{\ell,in}}$};
\node[] () at (1.5,9.1) {$v^i_{_{\ell,out}}$};
\node[] () at (8.4,9.2) {$v^j_{_{\ell,out}}$};

\node[] () at (8.3,6.6) {$v^j_{_{\ell+1,out}}$};
\node[] () at (6.3,6.7) {$v^j_{_{\ell+1,in}}$};
\node[] () at (3.8,6.7) {$v^i_{_{\ell+1,in}}$};
\node[] () at (1.7,6.7) {$v^i_{_{\ell+1,out}}$};



\draw[-triangle 45] (v11) -- (v1);

\draw[-triangle 45] (v2) -- (v12);

\draw[-triangle 45] (u11) -- (u1);

\draw[-triangle 45] (u2) -- (u12);

\draw[-triangle 45] (u11) -- (v11);
\draw[-triangle 45] (u2) -- (v11);
\draw[-triangle 45] (v11) -- (u1);
\draw[-triangle 45] (u12) edge [bend left=30] (v11);
\draw[-triangle 45] (v2) -- (u12);
\draw[-triangle 45] (v2) -- (u11);


\draw[-triangle 45] (u12) edge [bend right=40] (v12);
\draw[-triangle 45] (v12) edge [bend right=30] (u11);
\end{scope}

\end{tikzpicture}
\caption{Here we present the way we are connecting the vertices of the rows $i$ and $i+1$}
\label{fig:graph_H_i+1}
\end{figure}
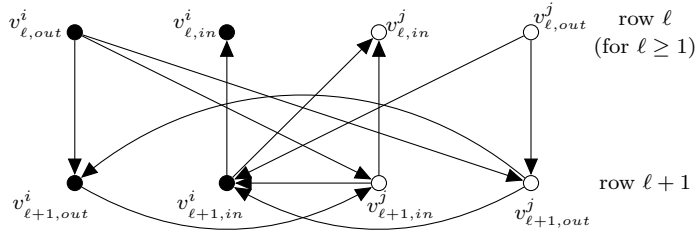

Let $P$ be the gadget we have constructed so far, consisting of the clique of
size $k$ and the $M$ rows of $2k$ vertices each. We will establish the
following properties.

\begin{enumerate}

\item Deleting all vertices $v^i, i\in [k]$ makes $P$ acyclic and eliminates
all directed paths from any vertex $v^i_{\ell,in}$ to any vertex
$v^j_{\ell',out}$, for all $i,j\in [k], \ell,\ell'\in [M]$.

\item The maximum degree of any vertex of $P$ is $4k-2$. 

\item There is a $k$-coloring of $P$ that gives all vertices of $\{
v^i_{\ell,in}, v^i_{\ell,out}\ |\ \ell\in [M]\}$ color $i$, for all $i\in [k]$.

\item In any $k$-coloring of $P$, for all $i$, all vertices of $\{
v^i_{\ell,in}, v^i_{\ell,out}\ |\ \ell\in [M]\}$ receive the same color as
$v^i$.

\end{enumerate}

Before we go on to prove these four properties of $P$, let us explain why they
imply the theorem. To complete the construction, we insert $P$ in our graph in
the place of the palette clique we were previously using. To each vertex of the
original graph, we associate a distinct row of $P$ (there are sufficiently many
rows to do this). Now, if vertex $u$ of the original graph, which is associated
to row $\ell$, had an arc from (respectively to) the vertex $v_i$ in the
palette, we add an arc from $v^i_{\ell,out}$ (respectively to $v^i_{\ell,in}$).

Let us first establish that the new graph has the properties promised in the
theorem. The maximum degree of any vertex in the main (non-palette) part
remains unchanged and is $2k+2\le 4k-1$ while the maximum degree of any vertex
of $P$ is now at most $4k-1$, as we added at most one arc to each vertex.
Deleting $\{ v^i\ |\ i\in [k]\}$ eliminates all cycles in $P$, but also all
cycles going through $P$, because such a cycle would need to use a path from a
vertex $v^i_{\ell, in}$ (since these are the only vertices with incoming arcs
from outside $P$) to a vertex $v^j_{\ell',out}$. Deleting all of $P$ leaves the
graph acyclic (recall that the palette clique was a feedback vertex set in our
previous construction), so there is a feedback vertex set of size $k$. 

For the feedback arc set we remove the arcs $\{ v^jv^i\ |\ j>i,\ i,j\in [k] \}
\cup \{ v^i_{1,in}v^j\ |\ j> i,\ i,j\in [k]\}\cup \{v^i_{1,in}v^i\ |\ i\in
[k]\} $. Note that these are indeed $k^2$ arcs. To see that this is a feedback
arc set, suppose that the graph contains a directed cycle after its removal.
This cycle must contain some vertex $v^i$, because we argued that $\{ v^i\ |\
i\in [k]\}$ is a feedback vertex set. Among these vertices, select the $v^i$
with minimum $i$. We now examine the arc of the cycle going into $v^i$. Its
tail cannot be $v^j$ for $j>i$, as we have removed such arcs, nor $v_j$ for
$j<i$, as this contradicts the minimality of $i$. It cannot be $v^i_{1,in}$ as
we have also removed these arcs. And it cannot be $v^j_{1,in}$ for $j<i$, as
these arcs are also removed. But no other in-neighbor of $v^i$ remains,
contradiction.

Let us also argue that using the gadget $P$ instead of the palette clique does
not affect the $k$-colorability of the graph. This is not hard to see because,
following Properties 3 and 4 we can assume that any $k$-coloring of $P$ will
give color $i$ to all vertices of $\{v^i\}\cup \{ v^i_{\ell,in},
v^i_{\ell,out}\ |\ \ell\in [M] \}$. The important observation is now that, for
all $\ell\in [M]$ there will always exist a monochromatic path from $v^i$ to
$v^i_{\ell,out}$ and from $v^i_{\ell,in}$ to $v^i$. We now note that, if we fix
a coloring of the non-palette part of the graph, this coloring contains a
monochromatic cycle involving vertex $v_i$ of the original palette if and only
if the same coloring gives a monochromatic cycle in the new graph going through
$v^i$. 

Finally, we need to prove the four properties. 

\noindent
\textbf{Property 1.} Once we delete $\{ v^i\ |\ i\in [k]\}$ we observe
that for every vertex $v^i_{\ell,in}$ its only outgoing arcs are to vertices
$v^j_{\ell,in}$ for $j<i$ or vertices $v^j_{\ell-1,in}$ for $j\ge i$. This
shows that we have eliminated all directed paths from $v^i_{\ell,in}$ to
$v^j_{\ell',out}$. Furthermore, this shows that no cycle can be formed using
$v^i_{\ell,in}$ vertices, since all their outgoing arcs either move to a
previous row, or stay in the same row but decrease $i$. In a similar way, no
directed cycle can be formed using only $v^i_{\ell,out}$ vertices, as all their
outgoing arcs either move to a later row, or stay in the same row but decrease
$i$. 

\noindent
\textbf{Property 2.} For a vertex $v^i$ we have $2k-2$ arcs incident on
it from the clique; the two arcs connecting it to $v^i_{1,in}, v^i_{1,out}$;
two arcs connecting it to $v^j_{1,in}, v^j_{1,out}$ for $j>i$; two arcs
connecting it to $v^j_{1,in}$ for $j<i$. This gives $2k-2 + 2i + 2(k-i) =
4k-2$. 

For a vertex $v^i_{1,in}$ we have one arc to $v^j$ for $j\le i$; two arcs to
$v^j$ for $j>i$; arcs to all $v^j_{in}, v^j_{out}$ for $j\neq i$; arcs to
$v^j_{2,in}$ for $j\le i$. This gives $i + 2(k-i) + 2(k-1) + i = 4k-2 $.

For a vertex $v^i_{1,out}$ we have arcs from $v^j$ for $j\le i$; arcs to
$v^j_{1,in}, v^j_{1,out}$ for $j\neq i$; arcs to $v^j_{2,in}$ and $v^j_{2,out}$
for $j\ge i$; arcs to  $v^j_{2,in}$ for all $j< i$. This gives $i + 2(k-1) +
2(k-i) + i = 4k-2$.

For a vertex $v^i_{\ell,in}$, $\ell\ge 2$ we have arcs to $v^j_{\ell-1,in}$,
for $j\ge i$; to $v^j_{\ell-1,out}$ for $j\neq i$; to $v^j_{\ell,in},
v^j_{\ell,out}$ for $j\neq i$; from $v^j_{\ell+1,in}$ for $j\le i$. This gives
$(k-i+1) + (k-1) + 2(k-2) + i = 4k-2$.

Finally, for a vertex $v^i_{\ell,out}$, $\ell \ge 2$ we have arcs from
$v^j_{\ell-1,out}$ for $j\le i$; to $v^j_{\ell,in}, v^j_{\ell,out}$ for $j\neq
i$; to all $v^j_{\ell+1,in}$, for $j\neq i$; to $v^j_{\ell+1,out}$ for $j\ge
i$.  This gives $i + 2(k-1) + (k-1) + (k-i+1) = 4k-2$. 

\noindent
\textbf{Property 3.}  We assign color $i$ to $v^i$ and to $\{
v^i_{\ell,in}, v^i_{\ell,out}\ |\ \ell\in [M] \}$. We claim that there is no
monochromatic cycle in $P$ with this coloring. Indeed, if such a cycle exists,
it must use $v^i$, as $\{ v^i\ |\ i\in [k]\}$ is a feedback vertex set. But
observe that with the coloring we gave, for each $\ell\in [M-1]$ the only
out-neighbor of $v^i_{\ell,out}$ with color $i$ is $v^i_{\ell+1,out}$ and
$v^i_{M,out}$ has no out-neighbor of color $i$. Similar examination of $\{
v^i_{\ell,in}\ |\ \ell\in [M]\}$ shows that the part of $P$ colored $i$ induces
a directed path on $2M+1$ vertices with $v^i$ in the middle.

\noindent
\textbf{Property 4.} Since the vertices $v^i$ induce a clique, we may
assume without loss of generality that we are given a coloring $c$ where
$c(v^i)=i$.  We prove the property by induction on $\ell$. For $\ell=1$, we
will first prove that $c(v^i_{1,in})=i$ by induction on $i$. For the base case
we have that $v^1_{1,in}$ is connected with a digon with $v^j$ for all $j>1$,
so $c(v^1_{1,in})=1$. Now, fix a $j$ and suppose that for all $i<j$ we have
$c(v^i_{1,in})=i$. Then $v^j_{1,in}$ cannot receive any color $i<j$, because
this would make a cycle with $v^i_{1,in}, v^i$. It can also not receive a color
$i>j$ because it has a digon to all $v^i$ for $i>j$. Hence, $c(v^j_{1,in})=j$.
Continuing on $\ell=1$, we will prove by reverse induction on $i$ that
$c(v^i_{1,out})=i$. For $c(v^k_{1,out})$ if we give this vertex any color $j<k$
then we get a cycle with $v^j, v^j_{1,in}$, so we must have $c(v^k_{1,out})=k$.
Now fix an $i$ and suppose that for all $j>i$ we have $c(v^j_{1,out})=j$. If we
give $v^i_{1,out}$ a color $j>i$ this will make a cycle with $v_j, v^j_{1,out},
v^i_{1,out}, v^j_{1,in}$.  But if we give $v^i_{1,out}$ a smaller color $j<i$,
this will also make a cycle with $v^j, v^j_{1,in}$. Therefore,
$c(v^i_{1,out})=i$ for all $i$.

Suppose now that the property is true for row $\ell$ and we want to prove it
for row $\ell+1$. We will use similar reasoning as in the previous case. We
will also use the observation that for all $i$, there is a monochromatic path
from $v^i$ to $v^i_{\ell,out}$ and a monochromatic path from $v^i_{\ell,in}$ to
$v^i$.  First, we show by induction on $i$ that $c(v^i_{\ell+1,in})=i$ for all
$i$.  For $v^1_{\ell+1,in}$ we observe that if we give this vertex color $j>1$,
then using the arcs from $v^j_{\ell,out}$ and to $v^j_{\ell,in}$ we have a
monochromatic cycle of color $j$. Hence, $c(v^1_{\ell+1,in})=1$. Fix a $j$ and
suppose that for all $i<j$ we have $c(v^i_{\ell+1,in})=i$. If we assign
$c(v^j_{\ell+1,in})$ a color $i<j$, then we get a cycle using $v^i_{\ell,out},
v^j_{\ell+1,in}, v^i_{\ell+1,in}, v^i_{\ell,in}$. If we assign it a color
$i>j$, then we get the cycle using $v^i_{\ell,out}, v^j_{\ell+1,in},
v^i_{\ell,in}$. So, for all $i$ we have $c(v^i_{\ell+1,in})=i$. To complete the
proof, we do reverse induction to show that $c(v^i_{\ell+1,out})=i$. For
$c(v^k_{\ell+1,out})$ we cannot give this vertex color $j<k$ because this will
give a cycle using $v^j_{\ell, out}, v^k_{\ell+1, out}, v^j_{\ell+1,in}
v^j_{ell,in}$. Now, fix an $i$ and assume that for $j>i$ we have
$c(v^j_{\ell+1,out})=j$. We cannot assign $v^i_{\ell+1,out}$ any color $j>i$
because this would give the cycle $v^j_{\ell,out}, v^j_{\ell+1,out},
v^i_{\ell+1,out}, v^j_{\ell+1, in}, v^j_{\ell, in}$. We can also not assign any
color $j<i$ as this gives the cycle using $v^j_{\ell,out}, v^i_{\ell+1,out},
v^j_{\ell+1, in}, v^j_{\ell, in}$. We conclude that for all $i$ we have
$c(v^i_{\ell+1,out})=i$.

\end{proof}

\section{Treewidth} \label{sec:bounded:treewidth}

In this section we consider the complexity of \textsc{Digraph Coloring} with
respect to parameters measuring the acyclicity of the underlying graph, namely,
treewidth and treedepth. Before we proceed let us recall that in all graphs $G$
we have $\chi(G) \le \tw(G)+1 \le \td(G)+1$. This means that if our goal is
simply to obtain an FPT algorithm then parameterizing by treewidth implies that
the graph's chromatic number (and therefore also the digraph's dichromatic
number) is bounded.  We first present an algorithm with complexity
$k^{\tw}(\tw!)$ which, using the above argument, proves that \textsc{Digraph
Coloring} is FPT parameterized by treewidth.

\begin{theorem}\label{thm:tw} There is an algorithm which, given a digraph $D$
on $n$ vertices and a tree decomposition of its underlying graph of width $\tw$
decides if $D$ is $k$-colorable in time $k^{\tw}(\tw!)n^{O(1)}$. \end{theorem}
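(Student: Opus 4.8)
The plan is to do standard dynamic programming over a nice tree decomposition, but the key subtlety is that unlike undirected coloring — where a color class just needs to be an independent set, a purely local condition — here each color class must induce a DAG, which is a global acyclicity condition. To handle this, I would augment the usual ``which color does each bag vertex get'' information with ordering information. Specifically, for each node $t$ of the tree decomposition with bag $X_t$, a DP state will record (a) a coloring $c: X_t \to [k]$ of the bag vertices, and (b) for each color $i$, a linear order of the vertices of $X_t$ that receive color $i$. The intended meaning is that the partial solution on the subtree rooted at $t$ admits, for each color class, a topological ordering of the corresponding induced subdigraph whose restriction to the bag vertices agrees with the recorded linear order. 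A state is declared \emph{feasible} if such a collection of topological orderings exists. This is exactly where the $\tw!$ factor enters: there are $k^{\tw}$ choices for the coloring and at most $(\tw!)$ ways to linearly order the (at most $\tw+1$) bag vertices across all color classes, giving $k^{\tw}(\tw!)$ states per node.

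First I would set up the DP by processing the nice tree decomposition bottom-up and defining feasibility of a state at each of the four node types (leaf, introduce, forget, join). For a \textbf{leaf} node with empty bag there is a single trivially feasible empty state. For an \textbf{introduce} node that adds a vertex $v$ to the bag, I would, for each feasible child state, try every color for $v$ and every position to insert $v$ into the linear order of its color class; the new arcs incident to $v$ (whose other endpoint is in the bag, since $X_t$ separates) must be consistent with the chosen order — i.e. every monochromatic arc $uv$ or $vu$ must point forward in the recorded order. For a \textbf{forget} node that drops $v$, I would project out $v$: a state is feasible if it is the restriction (delete $v$ from its color's order) of some feasible child state. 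The crucial node is the \textbf{join}: given two children with identical bags, a state $(c, \text{orders})$ is feasible if both children are feasible with the same coloring $c$ and if the two children's per-color linear orders can be \emph{merged consistently} with the state's order — more carefully, the topological orders promised by the two subtrees interact only through the shared bag vertices, so I must check that the combined partial digraph (the union of the two subtrees' color classes, glued along $X_t$) still admits per-color topological orderings inducing the recorded bag order.

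The main obstacle, and the step requiring the most care, is arguing correctness of the join and of the whole scheme — namely that recording only the \emph{induced linear order on the bag} is sufficient, so that feasibility as defined really certifies global acyclicity of each color class over the whole subtree. The key structural fact is that $X_t$ is a separator: any directed path (and in particular any cycle) in a single color class that uses vertices from both sides of the separator must pass through a bag vertex. Therefore a monochromatic directed cycle exists if and only if two partial topological orders of a color class, one from each side, cannot be simultaneously extended to a common order respecting their shared bag vertices; and this simultaneous extendability is detectable purely from the orders restricted to $X_t$. I would formalize this with an exchange/gluing argument: given topological orders of a color class in each subtree agreeing on $X_t$, one can interleave them into a single topological order of the union (the separator vertices act as synchronization points), and conversely a cycle in the union projects to an inconsistency visible on the bag. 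Finally I would confirm the running time: there are $k^{\tw}(\tw!)$ states per node, each transition is checkable in $\tw^{O(1)}$ (for join, merging two orderings agreeing on the bag is essentially reading off a consistent interleaving), there are $O(n)$ nodes, and we accept iff the root has a feasible state, yielding the claimed $k^{\tw}(\tw!)n^{O(1)}$ bound.
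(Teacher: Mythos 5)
Your proposal is correct and follows essentially the same approach as the paper: dynamic programming over a nice tree decomposition whose states record the bag's coloring together with ordering information (giving the $k^{\tw}(\tw!)$ state count), with correctness at introduce and join nodes resting on the fact that bags are separators, exactly as in your gluing argument. The only cosmetic difference is that the paper stores a single linear order of \emph{all} bag vertices (after reformulating $k$-colorability as a coloring plus one global ordering in which every monochromatic arc points forward), whereas you store a separate order per color class; both variants carry the same information for the transitions and yield the same $k^{\tw}(\tw!)n^{O(1)}$ bound.
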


\begin{proof}
The proof uses standard techniques so we sketch some details. In particular we
assume that we are given a nice tree decomposition on which we will perform
dynamic programming.  Before we proceed, let us slightly recast the problem. We
will say that a digraph $D=(V,E)$ is $k$-colorable if there exist two functions
$c,\sigma$ such that (i) $c:V\to [k]$ partitions $V$ into $k$ sets (ii)
$\sigma$ is an ordering of $V$ (iii) for all arcs $uv\in E$ we have either
$c(u)\neq c(v)$ or $\sigma(u)<\sigma(v)$. It is not hard to see that this
reformulation is equivalent to the original problem. Indeed, if we have a
$k$-coloring, since each color class is acyclic, we can find a topological
ordering $\sigma_i$ of the graph $G[V_i]$ induced by each color class and then
concatenate them to obtain an ordering of $V$. For the converse direction, the
existence of $c,\sigma$ implies that if we look at vertices of each color
class, $\sigma$ must induce a topological ordering, hence each color class is
acyclic.

Now, let $D$ be a digraph and $S$ be a subset of its vertices. Let $(c,\sigma)$
be a pair of coloring and ordering functions that prove that $D$ is
$k$-colorable. Then, we will say that the signature of solution $(c,\sigma)$
for set $S$ is the pair $(c_S,\sigma_S)$ where $c_S: S\to [ k ]$ is defined as
$c_S(u) = c(u)$ and $\sigma_S: S\to [ |S| ]$ is an ordering function such that
for all $u,v\in S$ we have $\sigma_S(u)<\sigma_S(v)$ if and only if
$\sigma(u)<\sigma(v)$. In other words, the signature of a solution is the
restriction of the solution to the set $S$.

Given a rooted nice tree decomposition of $D$, let $B_t$ be a bag of the
decomposition and denote by $B_t^{\downarrow}$ the set of vertices of $D$ which
are contained in $B_t$ and bags in the sub-tree rooted at $B_t$. Our dynamic
programming algorithm stores for each $B_t$ a collection of all pairs
$(c,\sigma)$ such that there exists a $k$-coloring of $D[B_t^{\downarrow}]$
whose signature is $(c,\sigma)$. If we manage to construct such a table for
each node, it will suffice to check if the collection of signatures of the root
is empty to decide if the graph is $k$-colorable.

The table is easy to initialize for Leaf nodes, as the only valid signature
contains the empty coloring and ordering function. For an Introduce node that
adds $u$ to a bag containing $B_t$ we consider all signatures $(c,\sigma)$ of
contained in the table of the child bag. For each such signature we construct a
signature $(c', \sigma')$ which is consistent with $(c,\sigma)$ but also colors
$u$ and places it somewhere in the ordering (we consider all such
possibilities). For each $(c',\sigma')$ we delete this signature if $u$ has a
neighbor in the bag who is assigned the same color by $c'$ but such that their
arc violates the topological ordering $\sigma'$. We keep all other produced
signatures. To see that this is correct observe that $u$ has no neighbors in
$B_t^{\downarrow}\setminus B_t$, because all bags are separators, so if we
produce an ordering of $B_t^{\downarrow}$ consistent with $\sigma'$ the only
arcs incident on $u$ that could violate it are contained in the bag (and have
been checked). For Forget nodes the table is easily update by keeping only the
restrictions of valid signatures to the new bag. Finally, for Join nodes we
keep a signature $(c,\sigma)$ if and only if it is valid for both sub-trees.
Again this is correct because nodes of one sub-tree not contained in the bag do
not have neighbors in the other sub-tree, so as long as we produce an ordering
consistent with $\sigma$ we can concatenate we cannot violate the topological
ordering condition.

For the running time observe that the size of the DP table is $k^{\tw}(\tw!)$,
because we consider all colorings and all ordering of each bag. In Introduce
nodes we spend polynomial time for each entry of the child node (checking all
placements of the new vertex), while computation in Join nodes can be performed
in time linear in the size of the table. So the running time is in the end
$k^{\tw}(\tw!)n^{O(1)}$.  \end{proof}

As we explained, even though Theorem \ref{thm:tw} implies that \textsc{Digraph
Coloring} is FPT parameterized by treewidth, the complexity it gives is
significantly worse than the complexity of \textsc{Coloring}, which is
essentially $k^{\tw}$. Our main result in this section is to show that this is
likely to be inevitable, even if we focus on the more restricted case of
treedepth and $2$ colors.

\begin{theorem} If there exists an algorithm which decides if a given digraph
on $n$ vertices and (undirected) treedepth $\td$ is $2$-colorable in time
$\td^{o(\td)}n^{O(1)}$, then the ETH is false. \end{theorem}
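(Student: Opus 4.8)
The plan is to reduce from \textsc{3-SAT}; by the Sparsification Lemma we may assume the input formula $\phi$ has $m=O(n)$ clauses, so that the ETH rules out any $2^{o(n)}$ algorithm. I will construct, in polynomial time, a digraph $D$ that is $2$-colorable if and only if $\phi$ is satisfiable and whose underlying graph has treedepth $\td=\Theta(n/\log n)$. A hypothetical $\td^{o(\td)}n^{O(1)}$ algorithm would then decide $\phi$ in time $2^{o(\td\log\td)}=2^{o(n)}$, since $\td\log\td=\Theta(n)$, contradicting the ETH.

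The central gadget exploits precisely the source of the extra $\tw!$ factor in Theorem~\ref{thm:tw}: in the reformulation of $2$-coloring as a pair $(c,\sigma)$, a monochromatic set of vertices carries a free ordering, and the number of such orderings is factorial. Concretely, I would fix a palette digon $\{v_1,v_2\}$ (forcing $c(v_1)\neq c(v_2)$) and build a rigid \emph{scale} $s_1,\dots,s_t$ in which every $s_i$ is joined by a digon to $v_2$ (forcing color $1$) while the arcs $s_1\to s_2\to\cdots\to s_t$ force $\sigma(s_1)<\cdots<\sigma(s_t)$. For each of $g$ groups of variables I would add a \emph{pointer} vertex $p_G$, again forced to color $1$ via a digon to $v_2$, whose position in $\sigma$ relative to the scale is unconstrained and hence selects a \emph{rank} $r_G\in\{0,1,\dots,t\}$. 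Reading off this rank is natural here: an arc $s_j\to p_G$ between two color-$1$ vertices is ``satisfied'' exactly when $\sigma(s_j)<\sigma(p_G)$, i.e. when $r_G\ge j$, so threshold comparisons against the scale are available as primitives. I would partition the $n$ variables into $g=\Theta(n/\log n)$ groups of $\lfloor\log t\rfloor$ variables each and let the binary expansion of $r_G$ encode the assignment of group $G$, so that a single pointer of rank in $[t]$ carries $\log t=\Theta(\log n)$ variables.

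Clause checking would follow the template of Theorem~\ref{theorem:2fvs:2col:NPhard}: each clause becomes a directed path attached to the palette that closes into a monochromatic cycle unless at least one of its literals is true, where a literal is now ``shorted out'' not by a free color choice but by a \emph{literal-checker} that is consistent with a proper coloring exactly when the relevant rank-bit has the required value. Since bit $i$ of $r_G$ corresponds to membership of $r_G$ in a union of $O(t)$ intervals, each literal-checker is assembled from threshold comparisons $r_G\ge a$ and $r_G\le b$ (the interval endpoints) combined by an OR along the clause path; crucially, all of these attach only to scale and pointer vertices. With $t=\Theta(n/\log n)$ and $g=\Theta(n/\log n)$ this balances so that $g\log t=\Theta(n)$ variables are encoded while the whole construction has polynomial size. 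For the treedepth bound I would take $X$ to be the scale together with the $g$ pointers, so $|X|=t+g=\Theta(n/\log n)$; after deleting $X$ the palette and the literal/clause gadgets break into components of constant size, giving $\td(D)\le |X|+O(1)=\Theta(n/\log n)$ via $\td(G)\le|X|+\td(G-X)$.

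The main obstacle will be the design and verification of the literal-checkers and clause gadgets. I expect the forward direction (a satisfying assignment yields a valid $(c,\sigma)$, placing each pointer at the rank encoding its group) to be routine, but the converse — arguing that any proper $2$-coloring forces the scale to be rigid, forces each pointer to a well-defined rank, and that an unsatisfied clause is \emph{unavoidably} monochromatic — requires care to rule out spurious cycles created by the many comparison arcs incident on the shared scale, and to guarantee that every rank value remains realizable so that completeness is not lost. A secondary technical point is to confirm that the interval-decoding gadgets, although of size $O(t)$, contribute only constant-depth components once the scale and pointers are removed, so that the treedepth accounting above indeed goes through.
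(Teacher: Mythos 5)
Your high-level strategy shares the paper's key insight (a set of vertices forced into one color class carries information in its topological ordering, and $2$-coloring can be made to read that ordering), but your execution diverges at the one step that actually carries the reduction, and as written that step fails. The only primitive your construction provides is this: a checker vertex $w$ with arcs $s_a\to w$ and $w\to p_G$ can be colored $1$ only if $\sigma(s_a)<\sigma(p_G)$, i.e.\ only if $r_G>a$ (symmetrically, $p_G\to w\to s_b$ tests $r_G<b$); a \emph{single} vertex can therefore test only a one-sided threshold. Composing checkers ``by an OR along the clause path,'' as you propose, means the clause cycle is non-monochromatic iff at least one checker on it is colored $1$, so per group the set of ranks accepted by a clause gadget is a union of one-sided threshold sets --- at best a prefix plus a suffix of $\{0,\dots,t\}$, the complement of a single interval. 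But under your binary encoding, the set of ranks making a literal true (``bit $i$ of $r_G$ equals $1$'') is a union of many disjoint \emph{bounded} intervals, and each bounded interval $a\le r_G\le b$ is an AND of two thresholds, which no single checker vertex can enforce. Consequently the gadget accepts strictly more ranks than intended, and unsatisfiable formulas can yield $2$-colorable digraphs. This is exactly the point you flag as ``the main obstacle,'' and it is not a routine verification issue: the missing ingredient is a mechanism for AND.

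The paper supplies that mechanism, and it is worth seeing how. In its clause gadget the cycle does not pass through single checker vertices in series; between consecutive connector vertices it places an entire \emph{set} of vertices, with all arcs from the connector into the set and all arcs from the set to the next connector. A color-$2$ path can route through any one vertex of the set, so the set blocks the monochromatic color-$2$ cycle only if \emph{all} of its vertices are colored $1$ --- each set is a parallel bundle computing an AND, and the cycle computes an OR of these ANDs. (The paper then uses each bundle to force an entire total order on a group $V_i$, encoding assignments as permutations via $|V_i|!\ge 2^{|X_i|}$, at the price of a $2^{O(n/\log n)}$-size construction, which an ETH lower bound permits; your rank/threshold encoding would keep the construction polynomial, a genuine difference, but it still needs the same bundle trick: one bundle of two checkers, realizing $s_a\to w\to p_G$ and $p_G\to w'\to s_{b+1}$, per interval, per literal, per clause.) With that repair your scheme does appear to go through --- in particular, consistency across clauses comes for free, since two active checkers asserting contradictory thresholds about the same pointer close a monochromatic cycle through the scale path --- but without it the reduction as described does not establish the backward implication.
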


\begin{proof}

Suppose we are given a \textsc{3-SAT} formula $\phi$ with $n$ variables and $m$
clauses. We will produce a digraph $G$ such that $|V(G)|=2^{O(n/\log n)}m$ and
$\td(G)=O(n/\log n)$ and $G$ is $2$-colorable if and only if $\phi$ is
satisfiable. Before we proceed, observe that if we can construct such a graph
the theorem follows, as an algorithm with running time $O^*(\td^{o(td)})$ for
$2$-coloring $G$ would decide the satisfiability of $\phi$ in time $2^{o(n)}$.

To simplify presentation we assume without loss of generality that $n$ is a
power of $2$ (otherwise adding dummy variables to $\phi$ can achieve this while
increasing $n$ be a factor of at most $2$). We begin the construction of $G$ by
creating $\log n$ independent sets $V_1,\ldots,V_{\log n}$, each of size
$\lceil \frac{2en}{\log^2n}\rceil$. We add a vertex $u$ and connect it with
arcs in both directions to all vertices of $\cup_{i\in[\log n]} V_i$. We also
partition the variables of $\phi$ into $\log n$ sets $X_1,\ldots,X_{\log n}$ of
size at most $\lceil \frac{n}{\log n}\rceil$. 

The main idea of our construction is that the vertices of $V_i$ will represent
an assignment to the variables of $X_i$. Observe that all vertices of $V_i$ are
forced to obtain the same color (as all are forced to have a distinct color
from $u$), therefore the way these vertices represent an assignment is via
their topological ordering in the DAG they induce together with other vertices
of the graph which obtain the same color.

To continue our construction, for each $i\in[\log n]$ we do the following: we
enumerate all the possible truth assignments of the variables of $X_i$ and for
each such truth assignment $\sigma:X_i\to \{0,1\}^{|X_i|}$ we define (in an
arbitrary way) a distinct ordering $\rho(\sigma)$ of the vertices of $V_i$. We
will say that the ordering $\rho(\sigma)$ is the \emph{translation} of
assignment $\sigma$. Note that there are $|V_i|! \ge (\frac{2en}{\log^2n})! \ge
(\frac{2n}{\log^2n})^{\frac{2en}{\log^2n}} = 2^{\frac{2en}{\log^2n}(1+\log
n-2\log\log n)}> 2^{\lceil\frac{n}{\log n}\rceil}$ for $n$ sufficiently large,
so it is possible to translate truth assignments to $X_i$ to orderings of $V_i$
injectively. Note that enumerating all assignments for each group takes time
$2^{O(n/\log n)}  = 2^{o(n)}$.

Consider now a clause $c_j$ of $\phi$ and suppose some variable of the group
$X_i$ appears in $c_j$. For each truth assignment $\sigma$ to $X_i$ which
satisfies $c_j$ we construct an independent set $S_{j,i,\sigma}$ of size
$|X_i|-1$, label its vertices $s_{j,i,\sigma}^{\ell}$, for $\ell\in [|X_i|-1]$.
For each $\ell$ we add an arc from $\rho(\sigma)^{-1}(\ell)$ to
$s_{j,i,\sigma}^{\ell}$ and an arc from $s_{j,i,\sigma}^{\ell}$ to
$\rho(\sigma)^{-1}(\ell+1)$. In other words, the $\ell$-th vertex of
$S_{j,i,\sigma}$ has an incoming arc from the vertex of $V_i$ which is
$\ell$-th according to the ordering $\rho(\sigma)$ which is the translation of
assignment $\sigma$ and an outgoing arc to the vertex of $V_i$ which is in
position $(\ell+1)$ in the same ordering. Observe that this implies that if all
vertices of $V_i$ and of $S_{j,i,\sigma}$ are given the same color, then the
topological ordering of the induced DAG will agree with $\rho(\sigma)$ on the
vertices of $V_i$. 

To complete the construction, for each clause $c_j$ we do the following: take
all independent sets $S_{j,i,\sigma}$ which we have constructed for $c_j$ and
order them in a cycle in some arbitrary way. For two sets $S_{j,i,\sigma},
S_{j,i',\sigma'}$ which are consecutive in this cycle add a new ``connector''
vertex $p_{j,i,\sigma,i',\sigma'}$, all arcs from $S_{j,i,\sigma}$ to this
vertex, and all arcs from this vertex to $S_{j,i',\sigma'}$. Finally, we
connect each connector vertex $p_{j,i,\sigma,i',\sigma'}$ we have constructed
to an arbitrary vertex of $V_1$ with a digon. This completes the construction.

Let us argue that if $\phi$ is satisfiable, then $G$ is $2$-colorable. We color
$u$ with color $2$, all the vertices in $V_i$ for $i\in[\log n]$ with $1$ and
all connector vertices $p_{i,j,\sigma,i',\sigma'}$ with $2$.  For each clause
$c_j$ there exists a group $X_i$ that contains a variable of $c_j$ such that
the supposed satisfying assignment of $\phi$, when restricted to $X_i$ gives an
assignment $\sigma:X_i\to\{0,1\}^{|X_i|}$ which satisfies $c_j$.  Therefore,
there exists a corresponding set $S_{j,i,\sigma}$. Color all vertices of this
set with $1$. After doing this for all clauses, we color all other vertices
with $2$.  We claim this is a valid $2$-coloring.  Indeed, the graph induced by
color $2$ is acyclic, as it contains $u$ (but none of its neighbors) and for
each $c_j$, all but one of the sets $S_{j,i,\sigma}$ and the vertices
$p_{j,i,\sigma,i',\sigma'}$.  Since these sets have been connected in a
directed cycle throught connector vertices, and for each $c_j$ we have colored
one of these sets with $1$, the remaining sets induce a DAG. For the graph
induced by color $1$ consider for each $V_i$ the ordering $\rho(\sigma)$, where
$\sigma$ is the satisfying assignment restricted to $V_i$. Every vertex outside
$V_i$ which received color $1$ and has arcs to $V_i$, has exactly one incoming
and one outgoing arc to $V_i$.  Furthermore, the directions of these arcs agree
with the ordering $\rho(\sigma)$. Hence, since $\cup_{i\in[\log n]}V_i$ touches
all arcs with both endpoints having color $1$ and all such arcs respect the
orderings of $V_i$, the graph induced by color $1$ is acyclic. 

For the converse direction, suppose we have a $2$-coloring of $G$. Without loss
of generality, $u$ has color $2$ and $\cup_{i\in[\log n]} V_i$ has color $1$.
Furthermore, all connectors $p_{j,i,\sigma,i',\sigma'}$ also have color $2$.
Consider now a clause $c_j$.  We claim that there must be a group
$S_{j,i,\sigma}$ such that $S_{j,i,\sigma}$ does not use color $2$.  Indeed, if
all such groups use color $2$, since they are linked in a directed cycle with
all possible arcs between consecutive groups and connectors, color $2$ would
not induce a DAG.  So, for each $c_j$ we find a group $S_{j,i,\sigma}$ that is
fully colored $1$ and infer from this the truth assignment $\sigma$ for the
group $X_i$. Doing this for all clauses gives us an assignment that satisfies
every clause.  However, we need to argue that the assignment we extract is
consistent, that is, there do not exist $S_{j,i,\sigma}$ and $S_{j',i,\sigma'}$
which are fully colored $1$ with $\sigma\neq \sigma'$. For the sake of
contradiction, suppose that two such sets exist, and recall that
$\rho(\sigma)\neq \rho(\sigma')$. We now observe that if $S_{j,i,\sigma}\cup
V_i$ only uses color $1$, then any topological ordering of $V_i$ in the graph
induced by color $1$ must agree with $\rho(\sigma)$, which is a total ordering
of $V_i$. In a similar way, the ordering of $V_i$ must agree with
$\rho(\sigma')$, so if $\sigma\neq \sigma'$ we get a contradiction.

Finally, let us argue about the parameters of $G$. For each clause $c_j$ of
$\phi$ we construct an independent set of size $O(n/\log^2n)$ for each
satisfying assignment of a group $X_i$ containing a variable of $c_j$. There
are at most $3$ such groups, and each group has at most $2^{n/\log n}$
satisfying assignments for $c_j$, so $|V(G)|=2^{O(n/\log n)}m$. 

For the treedepth, recall that deleting a vertex decreases treedepth by at most
$1$. We delete $u$ and all of $\cup_{i\in[\log n]}V_i$ which are $O(n/\log n)$
vertices in total. It now suffices to prove that in the remainder all
components have treedepth $O(n/\log n)$.  In the remainder every component is
made up of the directed cycle formed by sets $S_{j,i,\sigma}$ and connectors
$p_{j,i,\sigma,i',\sigma'}$. We first delete a vertex
$p_{j,i,\sigma,i',\sigma'}$ to turn the cycle into a directed ``path'' of
length $L=2^{O(n/\log n)}$. We now use the standard argument which proves that
paths of length $L$ have treedepth $\log L$, namely, we delete the
$p_{j,i,\sigma,i',\sigma'}$ vertex that is closest to the middle of the path
and then recursively do the same in each component.  This shows that the
remaining graph has treedepth logarithmic in the length of the path, therefore
at most $O(n/\log n)$.  \end{proof}

\section{2-Coloring Tournaments} \label{sec:2-col:tournament}

In this section we propose an algorithm that decides if a given tournament $T$ is $2$-colorable in time $O^*(\sqrt[3]{6}^n)$. Our algorithm starts by removing, arbitrarily, as many disjoint triangles from the tournament as possible and then considers all the proper partial colorings of the tournament induced on these triangles. Then we use a recursive algorithm in order to determine if any of these partial colorings can be extended to a proper $2$-coloring for the whole tournament.

\begin{algorithm}[h]
\caption{[$2$-COL($T$) decision function]}
\label{Alg1}
\begin{algorithmic}[1]
  \Require A tournament $T=(V,E)$.
  \Ensure Is $\overrightarrow{\chi}(T)=2$ or not?
  \State $V_1 \leftarrow \emptyset$, $V_2 \leftarrow V$
  \State $IsTwoDC\leftarrow \textbf{False}$
  \While{there is a triangle $\{v_1,v_2,v_3\}$ in $V_2$}
  \State $V_1 \leftarrow V_1 \cup \{v_1,v_2,v_3\}$ \State $V_2 \leftarrow V_2 \setminus \{v_1,v_2,v_3\}$
  \EndWhile
  \For{all 2-coloring $\mathcal{C}:V_1\rightarrow \{1,2\}$ that are proper}
  \State $IsTwoDC \leftarrow$ $IsTwoDC$ $\vee$ Ext $2$-DCN($T$,$V_1$,$\mathcal{C}$) \label{alg1:calls:alg2}
  \EndFor
  \State \Return $IsTwoDC$
  \end{algorithmic}
\end{algorithm}  
\noindent

As we mentioned, the previous algorithm uses another one in order to decide if a partial coloring is extendable.

In order to decide it, we search for two types of triangles in the tournament - triangles that contain one uncolored vertex and two vertices with the same color and triangles that contain only one colored vertex. For the first type, it is easy to see that we know which color we have to assign to the uncolored vertex. However, for the second type, the algorithm calls itself in order to decide if any of the possible colorings is extendable to this triangle.

\begin{algorithm}[h]
\caption{[Ext $2$-COL($T$,$V_C$,$\mathcal{C}$ ) decision function]}
\label{Alg2}
\begin{algorithmic}[1]
  \Require A tournament $T=(V,E)$, a set of vertices $V_C \subseteq V $ and a function $\mathcal{C}:V_C\rightarrow \{1,2\}$.
  \Ensure Can we find a proper $2$-coloring for $T$ by extending $\mathcal{C}$?
  \State $V_{NC} \leftarrow V \setminus V_C$
  \State $Ext\leftarrow \textbf{False}$
  \While{there is a triangle $\{v_1,v_2,v_3\}$ such that $v_1 \in V_{NC}$, $v_2,v_3 \in V_C$ and $\mathcal{C}(v_2)=\mathcal{C}(v_3)$ }
  \State $V_C \leftarrow V_C \cup \{v_1\}$,  $V_{NC} \leftarrow V_{NC} \setminus \{v_1\}$
  \State set $\mathcal{C}(v_1)$ to be the color that is not $\mathcal{C}(v_2)$ = $\mathcal{C}(v_3)$ \label{ext:triangles:type1}
  \EndWhile	
  \If{$\mathcal{C}$ is a proper coloring for $V_C$}
  \While{there is a triangle $\{v_1,v_2,v_2\}$ such that $v_1,v_2 \in V_{NC}$ and $v_3 \in V_C$}
  \State $V_C \leftarrow V_C \cup \{v_1,v_2\}$,  $V_{NC} \leftarrow V_{NC} \setminus \{v_1,v_2\}$
  \ForAll{the pairs $\{Col_1,Col_2\} \neq \{\mathcal{C}(v_3),\mathcal{C}(v_3)\}$}  \label{ext:triangles:type2:start}
  \State set $\mathcal{C}(v_1) \leftarrow Col_1$ and $\mathcal{C}(v_2) \leftarrow Col_2$
  \State $Ext \leftarrow Ext$ $\vee$ Ext $2$-DNC($T$,$V_C$,$\mathcal{C}$) \label{line:is:extendable} 
  \EndFor \label{ext:triangles:type2:end}
  \EndWhile
  \EndIf
  \State for all $v \in V_{NC}$ set $\mathcal{C}(v)$ to be 1
  \If{$\mathcal{C}$ is a proper coloring for $V$}
  \State $Ext \leftarrow \textbf{True}$ \label{line:proper:coloring}
  \EndIf
  \State \Return $Ext$ 
  \end{algorithmic}
\end{algorithm}  

Before we continue to the proof, let us recall that any tournament $T$ that has a directed cycle must contain a triangle.
Therefore, in the Algorithm~\ref{Alg1} we know that the graph $T[V\setminus V_1]$ where $V_1$ is the set that we use in the line~\ref{alg1:calls:alg2}, is acyclic as we could not find any other triangles in it.

Now, let us prove that the Algorithm~\ref{Alg2} does what we claim.

\begin{lemma} \label{lemma_ext_coloring}
Given a tournament $T=(V,E)$, a set of vertices $S \subseteq V$ such that $T[V\setminus S]$ is acyclic and a function $\mathcal{C} :S \rightarrow \{1,2\} $ of $S$, Algorithm~\ref{Alg2} applied to $V_C=S$ decides if we can find a function $\mathcal{C}^* :V\rightarrow \{1,2\} $ that gives a proper $2$-coloring for the tournament $T$ such that $\mathcal{C}^*(v)=\mathcal{C}(v)$ for all $v\in S$.
\end{lemma}

\begin{proof}

If the function cannot be extended the algorithm will return \textbf{False} because in order to change the value to \textbf{True} that means that at one of the calls of the algorithm  we checked an extension $\mathcal{C}^*$ of $\mathcal{C}$ and it was a proper coloring for the tournament which is a contradiction.
So we have to prove that if the given function $\mathcal{C}$ can be extended in order to give a proper coloring of the whole tournament then the algorithm will return \textbf{True}. For the rest of the proof let us call the triangles that contain one uncolored vertex and two vertices of the same color as \textit{type one} and the  triangles with two uncolored vertices as \textit{type two}. 
Let $\mathcal{C}$ be extendable (i.e., there is an extension $\mathcal{C}^*$ that gives a proper coloring for the tournament); the algorithm first checks if there exists a triangle of \textit{type one} and gives to the uncolored vertices the other color (in line~\ref{ext:triangles:type1}). It is clear that this is the only option for these vertices so that the new color function remains extendable. After that the algorithm checks for triangles of \textit{type two}. In this case we know that the two uncolored vertices cannot have both the same color as the third vertex; so we have a total of $2^2-1=3$ cases. After that the algorithm checks (between lines~\ref{ext:triangles:type2:start} and \ref{ext:triangles:type2:end}) if any of these possibilities can be can be extended and gives us a proper coloring (by calling itself in line~\ref{line:is:extendable}). 

As we mentioned, the algorithm tries to extend all the possible colorings (except those that are not proper) so at some point we have an extendable function $\mathcal{C}$ and either we do not have any uncolored vertices or we do not have any triangles of \textit{type one} or \textit{two}. 

\textit{Case 1.} Suppose $V_{NC} = \emptyset$ when line 16 of Algorithm 2 is executed. Then $\mathcal{C}$ is a proper coloring of $V$ which means that after the check in line~\ref{line:proper:coloring} we change the value of the variable $Ext$ to \textbf{True}. 

It remains to show that in the second case if we colored the remaining uncolored vertices with any color we have a proper coloring for $T$.

\textit{Case 2.} In this case we do not have any triangles of \textit{type one} or \textit{two}. This combined with the assumption that the coloring is extendable implies that by coloring the remaining vertices with any color we end up with a coloring that does not have any monochromatic triangle. It remains to show the following claim:

\begin{claim}
Let $T=(V,E)$ be a tournament and $C:V\rightarrow \{1,2\}$ a function that is a $2$-coloring of $T$ such that there is no monochromatic triangle. Then $C$ is a proper coloring.
\end{claim}

\begin{proof}
Assume that $C$ does not give a proper coloring. Then there must exist a monochromatic cycle $S$ with length grater than $3$. Note that $S$ induces a tournament. But any tournament which contains a directed cycle contains a triangle. This gives a contradiction since there are no monochromatic triangles in $T$.  
 \end{proof}

So our coloring is a proper $2$-coloring; thus the algorithm will change the value of the variable $Ext$ to \textbf{True} in line~\ref{line:proper:coloring} and due to the logic \textit{or} in line~\ref{line:is:extendable} this \textbf{True} will be kept until the algorithm terminates.
 \end{proof}

Finally we are going to prove that Algorithm~\ref{Alg1} decides if a tournament is $2$-colorable and that it runs in $O^*(\sqrt[3]{6}^n)$ time.

\begin{theorem} \label{lemma:alg:2col:turnament}
Given a tournament $T=(V,E)$, Algorithm~\ref{Alg1} decides if $T$ is $2$-colorable.
\end{theorem}
\begin{proof}
It is easy to see that Algorithm~\ref{Alg1} tries to extend any proper coloring of $V_1$. Now, in order to use Lemma~\ref{lemma_ext_coloring} we need to observe that we have no triangles in $V_2$. Since a tournament without triangles is acyclic, it follows that $V_2$ is acyclic. So, from lemma~\ref{lemma_ext_coloring} we know that if one of these colorings can be extended then the Algorithm~\ref{Alg2} will return \textbf{True}. Thus, Algorithm~\ref{Alg1} returns \textbf{True} if the tournament is $2$-colorable and \textbf{False} otherwise.
\end{proof}

\begin{theorem} \label{theorem:2col:tournamnt}
Let $T=(V,E)$ be a tournament. Then we can decide if the dichromatic number of $T$ is two in time $\mathcal{O}^*(\sqrt[3]{6}^n)$
\end{theorem}
\begin{proof}
Observe that in Algorithm~\ref{Alg1} all the steps are polynomial except the number of the proper ways to color the set $V_1$ and the time Algorithm~\ref{Alg2} needs. Now, it is easy to see that the number of proper ways to color $V_1$ is at most $6^{\frac{|V_1|}{3}}$ since for every triangle in $V_1$ we know that we have six possible choices to color it (all except the two that give to every vertex the same color). This means that we call Algorithm~\ref{Alg2} at most $6^{\frac{|V_1|}{3}}$ times. The running time of the second algorithm depends on the number of times that it will call itself. Now we can see that for the remaining vertices ($V_2= V\setminus V_1$), in the worst case, we need to check three different colorings (see proof of lemma~\ref{lemma_ext_coloring}) for two vertices at a time. Thus, the running time of Algorithm~\ref{Alg2} is $\mathcal{O}^*(3^{\frac{|V_2|}{2}})$. So, we can decide if $T$ is $2$-colorable in time 
$$ \mathcal{O}^*(6^{\frac{|V_1|}{3}} \cdot 3^{\frac{|V_2|}{2}}) = \mathcal{O}^* (\sqrt[3]{6}^{|V_1|+|V_2|})=\mathcal{O}^* (\sqrt[3]{6}^{n})$$
 \end{proof}

\section{Conclusions}

In this paper we have strengthened known results about the complexity of
\textsc{Digraph Coloring} on digraphs which are close to being DAGs, precisely
mapping the threshold of tractability for DFVS and FAS; and we precisely
bounded the complexity of the problem parameterized by treewidth, uncovering an
important discrepancy with its undirected counterpart. One question for further
study is to settle the degree bound for which $k$-\textsc{Digraph Coloring} is
NP-hard for DFVS $k$, and more generally to map out how  the tractability
threshold for the degree evolves for larger values of the DFVS  from
$4k-\Theta(1)$ to $2k+\Theta(1)$, which is the correct threshold when the DFVS
is unbounded.  With regards to undirected structural parameters, it would be
interesting to investigate whether a $\mathrm{vc}^{o(\mathrm{vc})}$ algorithm
exists for $2$-\textsc{Digraph Coloring}, where $\mathrm{vc}$ is the input
graph's vertex cover, as it seems challenging to extend our hardness result to
this more restricted case.



\bibliography{ref}

\end{document}